\DeclareMathOperator{\spn}{span}
\DeclareMathOperator{\size}{size}
\tikzset{snake it/.style={decorate, decoration=snake}}
\tikzset{
    >=stealth',
    punkt/.style={
           rectangle,
           rounded corners,
           draw=black, very thick,
           text width=6.5em,
           minimum height=2em,
           text centered},
    pil/.style={
           ->,
           thick,
           shorten <=2pt,
           shorten >=2pt,},
  on each segment/.style={
    decorate,
    decoration={
      show path construction,
      moveto code={},
      lineto code={
        \path [#1]
        (\tikzinputsegmentfirst) -- (\tikzinputsegmentlast);
      },
      curveto code={
        \path [#1] (\tikzinputsegmentfirst)
        .. controls
        (\tikzinputsegmentsupporta) and (\tikzinputsegmentsupportb)
        ..
        (\tikzinputsegmentlast);
      },
      closepath code={
        \path [#1]
        (\tikzinputsegmentfirst) -- (\tikzinputsegmentlast);
      },
    },
  },
  mid arrow/.style={postaction={decorate,decoration={
        markings,
        mark=at position .5 with {\arrow[#1]{stealth'}}
      }}}
}
\newcommand{\dt}{\text{depth}_Q}
\newcommand{\DTq}{\text{DT}^Q}
\newcommand{\modkl}{\text{Mod}_p\text{L}}
\newcommand{\spacet}{\text{SPACE}_{(2)}}
\newcommand{\PSP}{\text{PSP}}
\newcommand{\NL}{\text{NL}}
\renewcommand{\L}{\text{L}}
\renewcommand{\P}{\text{P}}
\newtheorem{theorem}{Theorem}
\newtheorem{definition}[theorem]{Definition}
\newtheorem{lemma}[theorem]{Lemma}
\newenvironment{proof}[1][Proof]{\noindent\textbf{#1.}}{\ \rule{0.5em}{0.5em}}
\begin{document} 

\title{Code-routing: a new attack on position verification}

\author[1]{Joy Cree}
\email{scree@stanford.edu}
\orcid{0000-0003-2283-3903}

\author[1]{Alex May}
\email{alexmay2@stanford.edu}
\orcid{0000-0002-4030-5410}

\affiliation[1]{Stanford University}

\begin{abstract}
The cryptographic task of position verification attempts to verify one party's location in spacetime by exploiting constraints on quantum information and relativistic causality.
A popular verification scheme known as $f$-routing involves requiring the prover to redirect a quantum system based on the value of a Boolean function $f$. 
Cheating strategies for the $f$-routing scheme require the prover use pre-shared entanglement, and security of the scheme rests on assumptions about how much entanglement a prover can manipulate.
Here, we give a new cheating strategy in which the quantum system is encoded into a secret-sharing scheme, and the authorization structure of the secret-sharing scheme is exploited to direct the system appropriately.
This strategy completes the $f$-routing task using $O(SP_p(f))$ EPR pairs, where $SP_p(f)$ is the minimal size of a span program over the field $\mathbb{Z}_p$ computing $f$.
This shows we can efficiently attack $f$-routing schemes whenever $f$ is in the complexity class $\modkl$, after allowing for local pre-processing. 
The best earlier construction achieved the class $\L$, which is believed to be strictly inside of $\modkl$. 
We also show that the size of a quantum secret sharing scheme with indicator function $f_I$ upper bounds entanglement cost of $f$-routing on the function $f_I$.
\end{abstract}

\vfill

\maketitle

\pagebreak

\tableofcontents

\section{Introduction}

\begin{figure*}
\begin{center}
\begin{subfigure}[b]{.8\textwidth}
\begin{center}
\begin{tikzpicture}[scale=0.55]
    
    \draw[fill=gray,opacity=0.4] (-1.5,1) -- (1.5,1) -- (1.5,7) -- (-1.5,7) -- (-1.5,1);
    
    \draw[->] (-7,-1) -- (-7,0);
    \node[above] at (-7,0) {$t$};
    \draw[->] (-7,-1) -- (-6,-1);
    \node[right] at (-6,-1) {$x$};
    
    \draw[->] (-4,-1) -- (-4,-0.1);
    \node[below] at (-4,-1) {$A_0$};
    
    \draw[->] (4,-1) -- (4,-0.1);
    \node[below] at (4,-1) {$A_1$};
    
    \node[left] at (-4,0) {$c_0$};
    \draw[fill=black] (-4,0) circle (0.15);
    
    \node[right] at (4,0) {$c_1$};
    \draw[fill=black] (4,0) circle (0.15);
    
    \draw[->] (4,8) -- (4,9);
    \node[above] at (4,9) {$B_1$};
    \draw[->] (-4,8) -- (-4,9);
    \node[above] at (-4,9) {$B_0$};

    \node[right] at (4,8) {$r_1$};
    \draw[fill=blue] (4,8) circle (0.15);

    \node[left] at (-4,8) {$r_0$};
    \draw[fill=blue] (-4,8) circle (0.15);
    
    \node[below] at (0,-0.53) {$ $};
    
    \end{tikzpicture}
\end{center}
\caption{}
\label{subfig:PVset-up}
\end{subfigure}
\begin{subfigure}[b]{.45\textwidth}
\begin{center}
\begin{tikzpicture}[scale=0.55]
    
    \draw[fill=gray,opacity=0.4] (-1.5,1) -- (1.5,1) -- (1.5,7) -- (-1.5,7) -- (-1.5,1);
    
    \draw[postaction={on each segment={mid arrow}}] (-4,0) -- (0,4);
    \draw[postaction={on each segment={mid arrow}}] (4,0) -- (0,4);
    \draw[postaction={on each segment={mid arrow}}] (0,4) -- (4,8);
    \draw[postaction={on each segment={mid arrow}}] (0,4) -- (-4,8);
    
    \node[below left] at (-4,0) {$c_0$};
    \draw[fill=black] (-4,0) circle (0.15);

    \node[below right] at (4,0) {$c_1$};
    \draw[fill=black] (4,0) circle (0.15);

    \node[above right] at (4,8) {$r_1$};
    \draw[fill=blue] (4,8) circle (0.15);

    \node[above left] at (-4,8) {$r_0$};
    \draw[fill=blue] (-4,8) circle (0.15);
    
    \draw[fill=yellow] (0,4) circle (0.30);
    
    \node[below] at (0,-0.53) {$ $};
    
    \node[above left] at (-3,1) {$A_0$};
    \node[above right] at (3,1) {$A_1$};
    
    \node[below left] at (-3,7) {$B_0$};
    \node[below right] at (3,7) {$B_1$};
    
    \end{tikzpicture}
\end{center}
\caption{}
\label{subfig:localschematic}
\end{subfigure}
\hfill
\begin{subfigure}[b]{.45\textwidth}
\begin{center}
\begin{tikzpicture}[scale=0.55]

    \draw[fill=gray,opacity=0.4] (-1.5,1) -- (1.5,1) -- (1.5,7) -- (-1.5,7) -- (-1.5,1);

    \node[below left] at (-4,0) {$c_0$};
    \draw[fill=black] (-4,0) circle (0.15);

    \node[below right] at (4,0) {$c_1$};
    \draw[fill=black] (4,0) circle (0.15);

    \node[above right] at (4,8) {$r_1$};
    \draw[fill=blue] (4,8) circle (0.15);

    \node[above left] at (-4,8) {$r_0$};
    \draw[fill=blue] (-4,8) circle (0.15);
    
    \draw[postaction={on each segment={mid arrow}}] (-4,0) -- (-2,2) -- (-2,6) -- (-4,8);
    \draw[postaction={on each segment={mid arrow}}] (4,0) -- (2,2) -- (2,6) -- (4,8);
    \draw[postaction={on each segment={mid arrow}}] (-2,2) -- (0,4) -- (2,6);
    \draw[postaction={on each segment={mid arrow}}] (2,2) -- (0,4) -- (-2,6);
    
    \draw[dashed] (2,2) -- (0,0) -- (-2,2);
    \node[below] at (0,0) {$\Psi_{LR}$};
    
    \draw[fill=yellow] (-2,2) circle (0.3);
    \draw[fill=yellow] (2,2) circle (0.3);
    \draw[fill=yellow] (-2,6) circle (0.3);
    \draw[fill=yellow] (2,6) circle (0.3);
    
    \node[above left] at (-3,1) {$A_0$};
    \node[above right] at (3,1) {$A_1$};
    
    \node[below left] at (-3,7) {$B_0$};
    \node[below right] at (3,7) {$B_1$};
    
\end{tikzpicture}
\end{center}
\caption{}
\label{subfig:nonlocalschematic}
\end{subfigure}
\caption{(a) A relativistic quantum task. Time proceeds upwards in the diagram, and the horizontal direction is a spatial dimension. Light rays follow lines with slope $\pm 1$. Input systems $A_0$ and $A_1$ are received at spacetime locations $c_0$ and $c_1$, respectively, and $B_0$ and $B_1$ should be returned at $r_0$ and $r_1$, respectively. The inputs and outputs should be related by some designated channel $\mathcal{N}_{A_0A_1\rightarrow B_0B_1}$. Bob, who issues the challenge, wishes to choose the channel such that Alice is forced to do computations within the gray spacetime region. (b) Completing the task in a local form. The yellow circle represents a channel acting on input systems $A_0$ and $A_1$, and producing output systems $B_0$ and $B_1$. Alice acts within the gray region, corresponding to an honest strategy. (b) A computation happening in the non-local form. $A_0$ is interacted with the $\L$ system, and $A_1$ with the $R$ system, where $\Psi_{LR}$ is entangled. Then, a round of communication is exchanged, and a second round of operations on each side are performed. All operations happen outside of the spacetime region, corresponding to a cheating strategy.}
\label{fig:cylinder1}
\end{center}
\end{figure*}
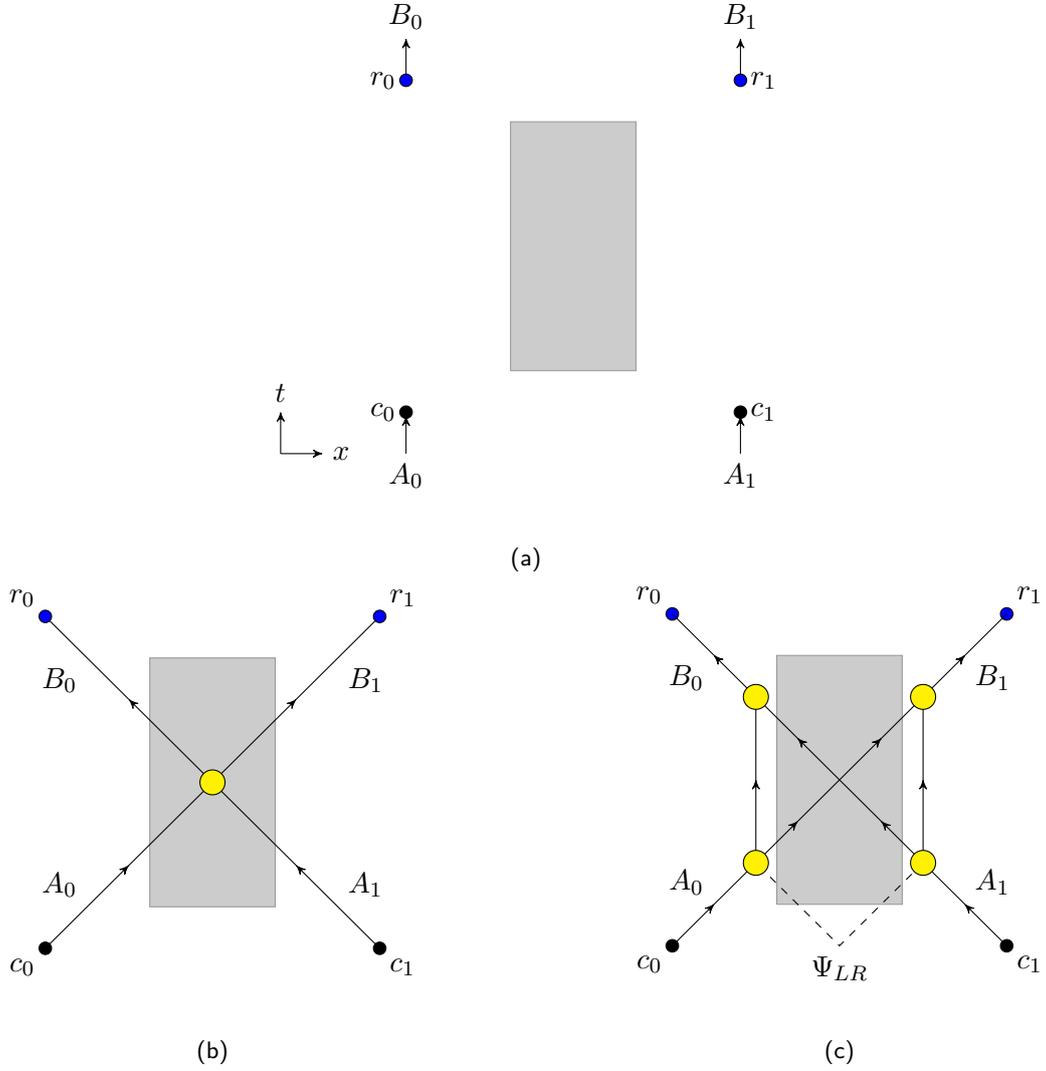

\subsection{Background}

In the cryptographic task of position verification \cite{chandran2009position,kent2011quantum}, a prover (Alice) and verifier (Bob) interact to establish the spatial location of the prover. To do this, Bob issues Alice a challenge, which Bob believes can only be accomplished if Alice applies quantum or classical operations within the spacetime region of interest. The challenge is a relativistic quantum task \cite{kent2012quantum}, with quantum and classical systems input at one set of spacetime locations and another set of input and output systems returned at a second, later set of spacetime points. 

We illustrate the typical position verification set-up in \cref{subfig:PVset-up}. At spacetime locations $c_0$ and $c_1$, which are spatially separated but occur at the same time, inputs $A_0$ and $A_1$ are transmitted by Bob and sent towards the grey shaded region. Then, Alice should process those inputs in some way and return the output systems $B_0$ and $B_1$ to spacetime locations $r_0$ and $r_1$. To complete this, Alice can either act honestly or dishonestly.\footnote{Note that it is more standard to label Bob's role as ``the attacker'', honest Alice's role as ``the prover'', and dishonest Alice's role as ``the cheater''. Our Alice and Bob language is closer to the `quantum tasks' language of \cite{kent2011quantum}, a more general framework within which position-verification can be understood.} If behaving honestly, Alice enters the shaded spacetime region, receives both the inputs and locally acts on them, as shown in \cref{subfig:localschematic}. If behaving dishonestly, Alice sends agents to either side of the grey region, intercepts both transmissions, and then acts in the non-local form shown in \cref{subfig:nonlocalschematic}. This involves local actions on each side of the region, possibly making use of pre-shared randomness or entanglement, and a single, simultaneous round of communication - a computation performed in this form we call a \emph{non-local (quantum) computation}. For a given choice of input state and transformation expected to be performed by Alice, acting in this non-local form may be sufficiently challenging so as to rule out this possibility. If so, then Bob has successfully verified that Alice acts within the specified region. 

Suppose that the input and output systems are all classical. For concreteness, label the input string at $c_0$ by $x$, and the input string at $c_1$ by $y$. Then the outputs at $r_0$ and $r_1$ are some functions $f_0(x,y)$ and $f_1(x,y)$ of the input strings. It is straightforward to see \cite{chandran2009position} that in this fully classical case it is always possible for Alice to cheat by completing the relativistic task in the form shown in \cref{subfig:nonlocalschematic}. To do so, the strategy is to copy the inputs $x$, $y$, then send one copy and keep the other so that $x$ and $y$ are both held at both output locations. Then, $f_0(x,y)$ is computed at $r_0$ and $f_1(x,y)$ at $r_1$, completing the task. 

Unlike classical information, quantum information cannot be copied \cite{wootters1982single}. Inspired by this, \cite{kent2006tagging, malaney2010location} suggested using position verification schemes with quantum input and output systems. It was realized however that even in the quantum case all relativistic quantum tasks can be completed in the non-local, cheating form shown in \cref{subfig:nonlocalschematic}, see \cite{kent2011quantum, buhrman2014position,beigi2011simplified}. This establishes that position verification cannot be made unconditionally secure, at least within the context of quantum mechanics in a fixed spacetime background and without placing assumptions on the entanglement available to an attacker. 

In the absence of unconditional security, we can look for assumptions under which the scheme may be considered secure. For some relativistic quantum tasks, it can be shown that all cheating strategies require large amounts of entanglement. Given this, one can introduce a security model that assumes a bounded amount of entanglement is shared, and then prove security of a position verification scheme by establishing that entanglement in excess of this bound is required to complete a given quantum task. 

Ideally, the relativistic quantum task used in the context of position verification is easy to complete in the honest strategy, and as hard as possible to complete in the dishonest form. One well studied proposal is $f$-routing, which takes the following form. At $c_0$, a quantum system $Q$ of dimension $d$ is given, along with a classical string $x$ of length $n$. At $c_1$, a classical string $y$ of length $n$ is given. As an output, Alice is required to return system $Q$ at $r_{f(x,y)}$, where $f$ is some fixed function mapping strings of length $2n$ to bits. Notice that to complete the $f$-routing task honestly Alice can bring $Q,x$ and $y$ into the spacetime region, compute $f$, then redirect $Q$ based on the outputs. Thus the quantum part of the strategy is almost trivial. 

Recently Bluhm, Christandl, and Speelman, \cite{bluhm2021position} proved the following statement. Pick a random function $f$. Then with high probability, any cheating strategy to complete the corresponding $f$-routing task requires a shared resource system with a dimension that grows with $n$. Thus by increasing $n$, the honest strategy involves a larger classical computation, but the dishonest strategy involves manipulating larger quantum systems. Assuming classical computations are ``easier'' in some appropriate sense than storing quantum systems, we can establish security of the scheme.   

Entanglement cost in the $f$-routing task exhibits an interesting relationship to classical complexity theory. 
One interesting attack on $f$-routing is the ``garden-hose'' protocol \cite{buhrman2013garden,klauck2014new,arunachalam2021communication}.
In that protocol, the number of EPR pairs needed to perform $f$-routing non-locally, call it $GH(f)$, is related to the memory cost of computing $f$ on a Turing machine. 
\begin{align}
    2^{\text{SPACE}_{(2)}(f)} \leq GH(f) \leq 2^{O(\text{SPACE}_{(2)}(f))}
\end{align}
where
\begin{align}
    \spacet (f) = \min_{\substack{M,\alpha,\beta:\\f(x,y) = M(\alpha(x),\beta(y))}}\text{SPACE}(M)\nonumber.
\end{align}
We note here that $\alpha$ and $\beta$ are arbitrary functions; they appear because Alice may locally manipulate her input strings before beginning a protocol. We refer to application of these functions as \emph{pre-processing}.

This connection between the garden-hose model and complexity theory is also constructive: an algorithm for computing $f$ can be turned into a non-local computation using $2^{O(\text{SPACE}_{(2)}(f))}$ entanglement, and a non-local computation in the garden-hose model can be turned into an algorithm for computing $f$, with memory cost given by $\log GH(f)$. 
This connection also suggests proving strong lower bounds on entanglement in $f$-routing should be challenging, as we would obtain lower bounds on space complexity as a consequence. 

The class of functions that can be implemented efficiently using the garden hose protocol is related to $L$, those functions that can be computed in log-space. 
However, the appearance of pre-processing means the efficiently computable functions are instead given by the class $\L_{(2)}$, defined as follows,
\begin{align}
    L_{(2)} \equiv \{f(x,y): f(x,y)=M(\alpha(x),\beta(y)), M\in L \}.
\end{align}
Note that here $L$ denotes the class of functions computable in space logarithmic in $n$, the length of the strings $x$ and $y$ (not the length of $\alpha(x)$ and $\beta(y)$). 
This is the class of functions for which we can complete the $f$-routing task non-locally using polynomial entanglement within the garden-hose protocol. We can analogously define the class $\P_{(2)}$, polynomial time when allowing pre-processing,
\begin{align}
    P_{(2)} \equiv \{f(x,y): f(x,y)=M(\alpha(x),\beta(y)), M\in P \},
\end{align}
where the $P$ inside the definition refers to functions with runtime polynomial in $n$, the length of $x$ and $y$. 
One consequence of the garden-hose protocol's connection to complexity theory is that certain explicit entanglement lower bounds are expected to be hard to prove. 
For example, given a function $f\in P_{(2)}$, if one showed $f$ requires super-polynomial entanglement, then we would learn that $\L_{(2)} \subsetneq P_{(2)}$. 
Since from the definitions above $L=P$ implies $L_{(2)}= P_{(2)}$, we have that $\L_{(2)} \subsetneq P_{(2)}$ implies $L\subsetneq P$. 
Proving that $L\subsetneq P$ however is a longstanding and difficult problem in computer science. 

Recently, a relationship between position-based cryptography and quantum gravity has been highlighted \cite{may2019quantum, may2020holographic}. As we discuss further in \cite{may2022complexity}, in that context there is a tentative expectation coming from the quantum gravity side that entanglement cost in non-local computation should be related to the complexity of the corresponding local computation. From this perspective, the complexity-entanglement relationship exhibited in the garden-hose protocol is especially interesting, and we were motivated to further study $f$-routing and its relationship to complexity due to that connection. 

The possible relationship between complexity and entanglement in non-local computation is also of practical interest in the context of position verification. 
For instance, consider the security setting in which we assume an attacker has bounded entanglement, but do not otherwise restrict their resources. 
In this setting we are interested in functions which require large entanglement to implement non-locally. 
At the same time, the geometry of a position-verification scenario requires the computation be implementable quickly when performed locally.\footnote{This comment is more precise after reading ahead to equation \ref{eq:scatteringregion}: the honest, local computation must be implementable within the region $J_{01\rightarrow 01}$. The time extent of this region is comparable to the spatial size of the region we are trying to localize an honest party to.}
If the function $f$ has exponential complexity, the honest party may not be able to compute it within the needed amount of time. 
Because it uses a randomly chosen (and hence high complexity) function the Bluhm, Christandl, and Speelman result \cite{bluhm2021position} faces this obstruction to realizing a practical and secure position verification setting. 
For this reason, it is important to understand the entanglement cost for implementing low-complexity functions.

\subsection{Summary of results}

In this paper we give a new strategy for completing the $f$-routing task non-locally, which we call ``code-routing''. 
The basic strategy of the protocol is to encode the input system $Q$ into a quantum secret sharing scheme whose access structure is related to the function $f$. 
The shares of the scheme are then routed on simple functions of single input bits. 
Compared to the existing garden-hose protocol, code-routing uses no more entanglement, and probably less. 
To understand why we make use of a connection between the code-routing strategy and complexity theory. 
We also use the code-routing strategy to establish a new relationship between entanglement cost in $f$-routing and the size of quantum secret sharing schemes.
Throughout the work, we work with $p$-dimensional quantum systems, which we call `qupits', with $p$ any prime.\footnote{We can for example choose $p$ based on the function family we wish to perform the $f$-routing task for.}

Calling the minimal entanglement required to $f$-route $E(f)$, we show
\begin{align}
    E(f) \leq O(SP_{p,(2)}(f))
\end{align}
where
\begin{align}
    SP_{p,(2)} (f) = \min_{\substack{M,\alpha,\beta:\\f(x,y) = M(\alpha(x),\beta(y))}}SP_p(M) ,
\end{align}
and $SP_p(M)$ is the minimal size of a span-program over the field $\mathbb{Z}_p$ that computes $M$. The complexity class of functions that can be computed with polynomial-sized span programs is $\modkl$ (see section \ref{sec:lowerbounds} for a definition), so that here the functions for which we can perform $f$-routing using polynomial entanglement is $\modkl_{(2)}$, where again the added subscript accounts for performing local pre-processing of the inputs.

To understand the relationship between entanglement cost in the garden-hose protocol and code-routing, we note first that\footnote{This and other inclusions stated in this paragraph are explained in section \ref{sec:lowerbounds}.} $\L \subseteq \modkl $, and consequently $\text{L}_{(2)} \subseteq \modkl_{(2)}$. Thus, we can perform $f$-routing efficiently for at least those functions that can be efficiently performed in the garden-hose protocol. Further, it is believed that $\L\subsetneq \modkl$. We recall the evidence for this in section \ref{sec:lowerbounds}. Consequently in considering the classes $\L_{(2)}$ and $\modkl_{(2)}$, a strictly larger class of functions can be used to compute the non-local part of $f$. We believe that as a consequence $\L_{(2)}\subsetneq \modkl_{(2)}$. We explain our intuition for this but cannot show it.  

A further consequence of our protocol is a relationship between the size of quantum secret sharing schemes and entanglement requirements in $f$-routing. 
In particular, a quantum secret sharing scheme records a secret, $Q$, into a set of shares $\{v_1,...,v_n\}$ such that some subsets recover $Q$ and others reveal nothing about it. 
The size of a secret sharing scheme is the sum of the log dimension of all the shares. The structure of the scheme is captured by the \emph{indicator function}, which is defined as a map from subsets of shares to bits, and is $0$ when the subset reveals nothing about the secret and $1$ when the subset reveals the secret. Ideally, one constructs a secret sharing scheme with as small of a size as possible for a given indicator function. 

When considering $f$-routing tasks where $f$ can be realized as an indicator function, we build a code-routing scheme that shows the entanglement requirement $E(f)$ is upper bounded by the size of any secret sharing scheme with $f$ as its indicator function. This can also be understood as a constraint on the size of secret sharing schemes. 

It is also interesting to ask if $\modkl_{(2)}$ is the largest class of functions that can be completed using code-routing protocols with polynomial entanglement. Our protocol that achieves this is a special case of the most general possible code-routing construction, in particular it restricts to a class of secret sharing schemes constructed by Smith \cite{smith2000quantum}.
Assuming only those codes are used, and under further constraints on the protocol, we give some partial converse results. For code-routing protocols where Smith codes are used, we can show their complexity is within $P_{(2)}$. When restricting to protocols that concatenate Smith codes to only $O(1)$ depth, we show their complexity is within $\modkl_{(2)}$. For code-routing protocols using arbitrary codes with $O(1)$ shares, we show their complexity is within $L_{(2)}$. Throughout, we have to assume that a certain measure of the size of the protocol is related polynomially to the entanglement used. These results eliminate some directions in which one can try to use a code-routing protocol to perform $f$-routing on functions of larger complexity, and highlight the remaining possibilities. 

\section{\texorpdfstring{$f$}{TEXT}-routing and code-routing protocols}

\subsection{Definition of the \texorpdfstring{$f$}{TEXT}-routing task}

To describe the $f$-routing task, it will be helpful to consider Alice, who carries out the protocol to be an \emph{agency} with several \emph{agents}. Alice's agents co-operate with one another to complete the task. Similarly, Bob is an agency with several agents, who may move through spacetime along different trajectories. For convenience, we will say for example that Bob gives Alice system $A$ at spacetime location $c_0$. Somewhat more precisely, this means that an agent of Bob's, who is located at $c_0$, gives an agent of Alice's the system $A$. 

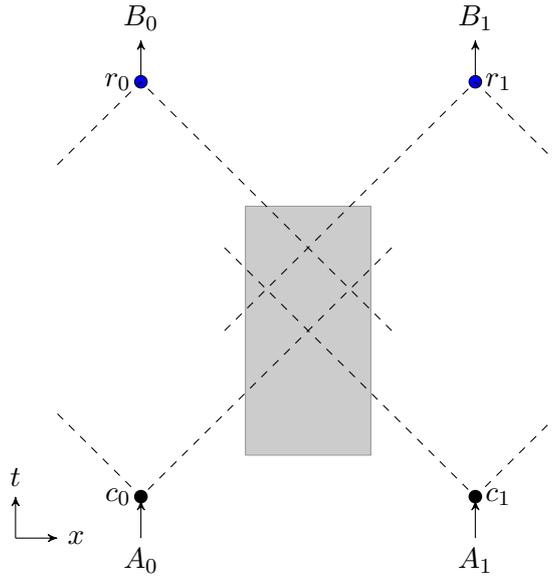
\begin{figure}
    \centering
\begin{tikzpicture}[scale=0.55]
    
    \draw[fill=gray,opacity=0.4] (-1.5,1) -- (1.5,1) -- (1.5,7) -- (-1.5,7) -- (-1.5,1);
    
    \draw[->] (-7,-1) -- (-7,0);
    \node[above] at (-7,0) {$t$};
    \draw[->] (-7,-1) -- (-6,-1);
    \node[right] at (-6,-1) {$x$};
    
    \draw[->] (-4,-1) -- (-4,-0.1);
    \node[below] at (-4,-1) {$A_0$};
    
    \draw[->] (4,-1) -- (4,-0.1);
    \node[below] at (4,-1) {$A_1$};
    
    \node[left] at (-4,0) {$c_0$};
    \draw[fill=black] (-4,0) circle (0.15);
    
    \node[right] at (4,0) {$c_1$};
    \draw[fill=black] (4,0) circle (0.15);
    
    \draw[->] (4,10) -- (4,11);
    \node[above] at (4,11) {$B_1$};
    \draw[->] (-4,10) -- (-4,11);
    \node[above] at (-4,11) {$B_0$};

    \node[right] at (4,10) {$r_1$};
    \draw[fill=blue] (4,10) circle (0.15);

    \node[left] at (-4,10) {$r_0$};
    \draw[fill=blue] (-4,10) circle (0.15);
    
    \node[below] at (0,-0.53) {$ $};

    \draw[dashed] (-2,6) -- (4,0) -- (6,2);
    \draw[dashed] (2,6) -- (-4,0) -- (-6,2);

    \draw[dashed] (6,8) -- (4,10) -- (-2,4);
    \draw[dashed] (-6,8) -- (-4,10) -- (2,4);
    
    \end{tikzpicture}
    \caption{Illustration of the scattering region $J_{01\rightarrow 01}$. The dashed lines extending forward from $c_i$ represent the region $J^+(c_i)$, while the dashed lines extending backwards from the $r_i$ represent the $J^-(r_i)$ regions. Intersecting the four regions $J^+(c_1)$, $J^+(c_2)$, $J^-(r_1)$, $J^-(r_2)$ defines the small diamond sitting within the grey region, which is $J_{01\rightarrow 01}$.}
    \label{fig:scatteringregion}
\end{figure}

The routing task is defined as follows. 
\begin{definition}
An $f$-routing task is defined by a Boolean function $f:\{0,1\}^{2n}\rightarrow \{0,1\}$. The task is carried out by two agencies, Alice and Bob. At spacetime location $c_0$ Bob gives Alice a quantum system $Q$ and a classical string $x$ of length $n$. At spacetime location $c_1$ Bob gives Alice a string $y$. Strings $x$ and $y$ are drawn from the uniform distribution, while $Q$ is in a maximally entangled state $\ket{\Psi^+}_{Q\bar{Q}}$ with reference system $\bar{Q}$ held by Bob. Alice returns a quantum system $B_0$ at location $r_0$ and $B_1$ at $R_1$. Bob measures $\bar{Q}B_{f(x,y)}$ to test if it is in the state $\ket{\Psi^+}$, and Alice completes the task successfully if the test succeeds. 
\end{definition}
When convenient, we will refer to Alice's agent at $c_0$ as Alice$_0$, and Alice's agent at $c_1$ as Alice$_1$. 
As well, it is sometimes convenient to refer to $c_0$ and $r_0$ together collectively as `the left' and $c_1$ and $r_1$ together as the `the right'. 

To complete a routing task, the simplest strategy is to bring $x$, $y$ and $Q$ together, compute the function $f$, and then direct $Q$ based on the result of the computation. To use an $f$-routing task to verify if Alice performs non-trivial operations within a spacetime region $R$, the points $c_0,c_1,r_0,r_1$ should be arranged such that performing this local strategy requires entering $R$. In particular, we define the region
\begin{align}\label{eq:scatteringregion}
    J_{01\rightarrow 01} = J^+(c_0) \cap J^+(c_1) \cap J^-(r_0) \cap J^-(r_1).
\end{align}
Here $J^+(p)$ is the future light cone of $p$, meaning the set of all points $q$ such that information can travel from $p$ to $q$ without moving faster than light, and $J^-(p)$ is the past light cone of $p$, meaning the set of all points $q$ such that one can travel from $q$ to $p$ without travelling faster than the speed of light. 
This is the region in which the input to the local computation of $f$ are available, and the outputs from the computation can still reach the output points. 
Consequently, we choose $c_0,c_1,r_0,r_1$ such that $J_{01\rightarrow 01} \subseteq R$ when we wish to verify Alice can perform computations within $R$. 

\begin{figure*}
    \centering
    \begin{subfigure}{0.45\textwidth}
    \centering
    \begin{tikzpicture}[scale=0.8]
    
    \draw[red, thick] (-1,1)  to [out=-90,in=180] (0,0);
    
    \node[above] at (-1,1) {$Q$};
    \draw[black] plot [mark=*, mark size=2] coordinates{(-1,1)};
    
    \draw[thick] (0,0) -- (5,0);
    \draw[black] plot [mark=*, mark size=2] coordinates{(0,0)};
    \draw[black] plot [mark=*, mark size=2] coordinates{(5,0)};
    
    \draw[thick] (0,-1) -- (5,-1);
    \draw[black] plot [mark=*, mark size=2] coordinates{(0,-1)};
    \draw[black] plot [mark=*, mark size=2] coordinates{(5,-1)};

    \node[right] at (-0.5,0.5) {$x=1$}; 
    
    \draw[blue, thick] (5,0)  to [out=0,in=0] (5.1,-1);
    \node[right] at (5.3,-0.8) {$y=0$};
    
    \end{tikzpicture}
    \caption{}
    \label{fig:disconnectedsurfacesintro}
    \end{subfigure}
    \hfill
    \begin{subfigure}{0.45\textwidth}
    \centering
    \begin{tikzpicture}[scale=0.8]
    
    \draw[blue, thick] (-1,1)  to [out=-90,in=180] (0,0);
    \draw[red, thick] (-1,1)  to [out=-90,in=180] (0,-2);
    
    \node at (-2,-1) {$x=1$};
    
    \node[above] at (-1,1) {$Q$};
    \draw[black] plot [mark=*, mark size=2] coordinates{(-1,1)};
    
    \draw[thick] (0,0) -- (5,0);
    \draw[black] plot [mark=*, mark size=2] coordinates{(0,0)};
    \draw[black] plot [mark=*, mark size=2] coordinates{(5,0)};
    
    \draw[thick] (0,-1) -- (5,-1);
    \draw[black] plot [mark=*, mark size=2] coordinates{(0,-1)};
    \draw[black] plot [mark=*, mark size=2] coordinates{(5,-1)};

    \node[right] at (-0.5,0.5) {$x=0$}; 
    
    \draw[blue, thick] (5,0)  to [out=0,in=0] (5.1,-1);
    \node[right] at (5.3,-0.8) {$y=0$};
    
    \draw[thick] (0,-2) -- (5,-2);
    \draw[black] plot [mark=*, mark size=2] coordinates{(0,-2)};
    \draw[black] plot [mark=*, mark size=2] coordinates{(5,-2)};
    
    \end{tikzpicture}
    \caption{}
    \label{fig:connectedsurfacesintro}
    \end{subfigure}
    \caption{Some simple garden-hose protocols. Blue lines indicate Bell basis measurements. Black lines indicate shared EPR pairs, with the left side of the pairs held by Alice$_0$ and right side held by Alice$_1$. a) Garden-hose protocol for computing $AND(x,y)$. Alice$_0$ measures $Q$ and the first EPR pair in the Bell basis iff $x=1$. Alice$_1$ measures the two EPR pairs iff $y=0$. b) Garden-hose protocol for $OR(x,y)$, which uses similar conditional measurements.}
    \label{fig:GHexamples}
\end{figure*}
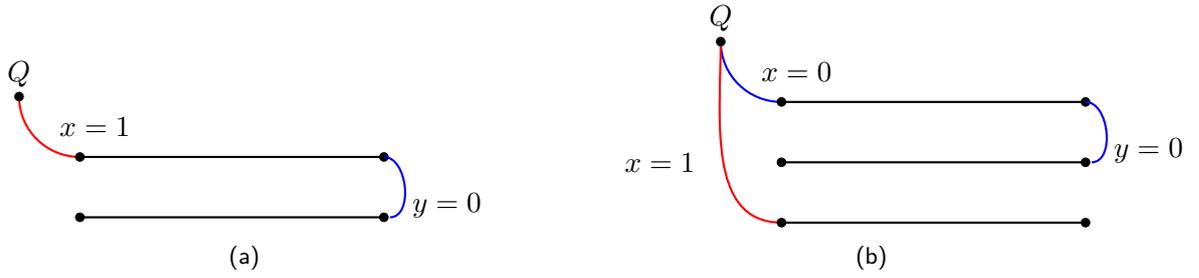

To perform the routing task non-locally, the best known strategy is the garden-hose protocol \cite{buhrman2013garden}.
It involves sharing EPR pairs between $c_0$ and $c_1$, then doing a set of Bell measurements on pairs of entangled particles.
Which measurements are performed depends on the values of the strings $x$ and $y$.
The measurement outcomes are then communicated to both of the output locations.
If the mappings from strings $x$, $y$ to a set of measurements on both sides is chosen correctly, it will be possible to recover the system $Q$ at $r_{f(x,y)}$.
We give simple examples of computing a NOT and AND function in \cref{fig:GHexamples}. As discussed in the introduction, the entanglement cost of completing the $f$-routing task using the garden-hose protocol is controlled by the space complexity of $f$. 

Another possible attack is given in \cite{beigi2011simplified}. 
This attack also works for arbitrary quantum tasks. Applied to $f$-routing, it has exponential in $n$ entanglement cost for any choice of function $f$. 

\subsection{Code-routing protocols}

Error-correcting codes are a standard tool appearing throughout quantum information theory --- here we consider their use in performing the $f$-routing task.
Because only two parties (an agent on the left and on the right) are involved in a non-local computation, it is unclear why error-correcting codes should be related to non-local computation. 
However, we are motivated to do this because of a recent connection \cite{may2019quantum} between non-local quantum computation and the AdS/CFT correspondence \cite{maldacena1999large,witten1998anti}.
Error-correction plays an important role in the AdS/CFT correspondence, suggesting a connection between non-local computation and error-correction.
We study a family of $f$-routing protocols that exploit error correction, which we call code-routing protocols.
After giving the general form of any such protocol, we discuss a particular class of codes that expands the set of computations performable using polynomial entanglement to $\modkl_{(2)}$, a complexity class which is known to be at least as large as $\text{L}_{(2)}$, and is probably larger.

The basic structure of a code-routing protocol involves recording $Q$ into an error-correcting code, then sending the shares of that code to the left or right based on the input variables. 
We can also carry out garden-hose type strategies on individual shares, or record those shares into subsequent codes, including choosing which encoding to use based on the input variables. 

The simplest example of a code-routing protocol, which we will use as a subroutine in subsequent constructions, is `unit-routing'. The functionality of the unit-routing protocol is to send a share $v_i$ to the side labelled by a bit $z_j$. We explain how to perform the unit-routing protocol in \cref{fig:unitrouting}.

We describe the most general form of a code-routing protocol below.\footnote{The reader may wish to skip this detailed definition and return to it after understanding some of the simple examples below.}
\begin{definition}\label{def:crprotocol}\textbf{Code-routing protocol:}
A code-routing protocol is defined by two maps $C_0[x]$ and $C_1[y]$, each mapping from input strings of length $n$ to a tuple,
\begin{align}
    C_0[x] &: \{0,1\}^n \rightarrow (a(x),S_0,...,S_{\ell}), \nonumber \\
    C_1[y] &: \{0,1\}^n \rightarrow (b(y),S_{\ell+1},...,S_{\ell+\ell'}). \nonumber 
\end{align}
The combined outputs 
\begin{align}
    I=(a(x),S_0,...S_\ell,b(y),S_{\ell+1},...S_{\ell+\ell'})
\end{align}
we refer to as the \emph{protocol tape}. 
Each $S_i$ corresponds to one encoding, teleportation, or `unit-routing' of a local share.
We denote it as a tuple $S_i = (v_i,\{w_i^j\},T_i)$, with $v_i$ a label for an input share, $\{w_i^j\}$ a set of output shares, and $T_i$ a description of an encoding, teleportation, or `unit-routing'. 
Define $n_i=|\{w_i^j\}|$ to be the number of output shares associated with $S_i$. Then:
\begin{itemize}
    \item When $n_i=0$, $T_i$ describes a unit-routing or keep/send instruction. For a unit-routing $T_i$, will be the label of a single bit of $a(x)$ or $b(y)$, or its negation. For a keep/send instruction, $T_i$ will be a $0$ or $1$ indicating that the share should be brought to $r_0$ or $r_1$. 
    \item When $n_i=1$, $T_i$ will be empty, and the tuple $(v_i,w_i^0,\emptyset)$ describes a teleportation from the $v_i$ system onto the $w_{i}^0$ system.\footnote{Note that here, the Pauli correction which is required in the teleportation protocol is implemented in the final stage of the non-local computation. Subsequent encodings, teleportations, or unit-routings of the share $w_i^0$ will take place \emph{before} this correction is performed.}
    \item When $n_i>1$, $T_i$ describes an encoding into an error-correcting code, with the $w_i^j$ systems the output systems of the encoding procedure.
\end{itemize}
Alice$_0$ and Alice$_1$ carry out the code-routing protocol by computing $C_0[x]$ and $C_1[y]$, then encoding, teleporting, or unit-routing each share according to the pattern described by the protocol tape. 
\end{definition}
Code-routing includes the garden-hose protocol as a special case: if no systems are put into codes, the remaining protocol amounts to a set of choices about which pairs of entangled systems should be measured in the Bell basis, as in the garden-hose protocol.
This shows code-routing uses at most as much entanglement as the garden-hose.
More generally, including non-trivial encodings allows a larger class of strategies. 

\begin{figure*}
    \centering
    \begin{subfigure}{0.45\textwidth}
    \centering
    \begin{tikzpicture}[scale=0.4]
    
    \draw[thick] (0,0) -- (0,3) -- (3,3) -- (3,0) -- (0,0);
    \node at (1.5,1.5) {\large{$x$}};
    
    \draw[thick] (1.5,3) -- (1.5,4.5);
    \draw[black] plot [mark=*, mark size=3] coordinates{(1.5,4.5)};
    
    \draw[gray,<->] (4.5,1.5) -- (7.5,1.5);
    
    \draw[black] plot [mark=*, mark size=3] coordinates{(9,1.5)};
    \draw[blue,->] (9,1.5) -> (11,3.5);
    \node[above right, align=center] at (11,3.5) {if $x=1$, \\ send};
    
    \node[above] at (1.5,4.5) {$v$};
    \node[below] at (9,1.5) {$v$};
    
    \end{tikzpicture}
    \caption{}
    \label{fig:unitroutinga}
    \end{subfigure}
    \hfill
    \begin{subfigure}{0.45\textwidth}
    \centering
    \begin{tikzpicture}[scale=0.4]
    
    \draw[thick] (0,0) -- (0,3) -- (3,3) -- (3,0) -- (0,0);
    \node at (1.5,1.5) {\large{$y$}};
    
    \draw[thick] (1.5,3) -- (1.5,4.5);
    \draw[black] plot [mark=*, mark size=3] coordinates{(1.5,4.5)};
    
    \draw[gray,<->] (4.5,1.5) -- (7.5,1.5);
    
    \draw[black] plot [mark=*, mark size=3] coordinates{(9,4.5)};
    \draw[thick] (9,4.5) to [out=-90,in=180] (11.9,1.5);
    \draw[black] plot [mark=*, mark size=3] coordinates{(12,1.5)};
    \draw[black,thick] (12,1.5) -- (16,1.5);
    \draw[black] plot [mark=*, mark size=3] coordinates{(16,1.5)};
    
    \draw[blue,->] (16,1.5) -- (14,3.5);
    \node[above right, align=center] at (14,3.5) {if $y=0$,\\ send};
    
    \node[above] at (1.5,4.5) {$v$};
    \node[above] at (9,4.5) {$v$};
    
    \end{tikzpicture}
    \caption{}
    \label{fig:unitroutingb}
    \end{subfigure}
    \caption{Illustration of the \emph{unit-routing} protocol. The effect of the protocol is to bring the share $v$ to the side labelled by the input bit. a) For an input bit $z=x_i$ held by Alice$_0$, who holds share $v$, the share is sent to $c_1$ during the communication round iff $z=1$. b) With $v$ at $c_0$ but input bit $z=y_j$ held at $c_1$, the share is first measured in the Bell basis with one end of an EPR pair that has been shared between $c_0$ and $c_1$. After the measurement, the systems at $c_1$ holds the information on $v$ up to a Pauli correction. Call this system $v'$. During the communication round, Alice$_1$ sends $v'$ to $c_0$ if $z=0$, and keeps it if $z=1$. Simultaneously, Alice$_0$ maintains a copy of her measurement outcome and sends a copy to the right. On whichever side $v'$ has been brought to, the local agent can undo the Pauli correction and recover $v$. Notice that the bit $z$ could also be the NOT of one of the input bits received by Alice$_0$ or Alice$_1$. Similar protocols are used when $v$ is held by Alice$_1$.}
    \label{fig:unitrouting}
\end{figure*}
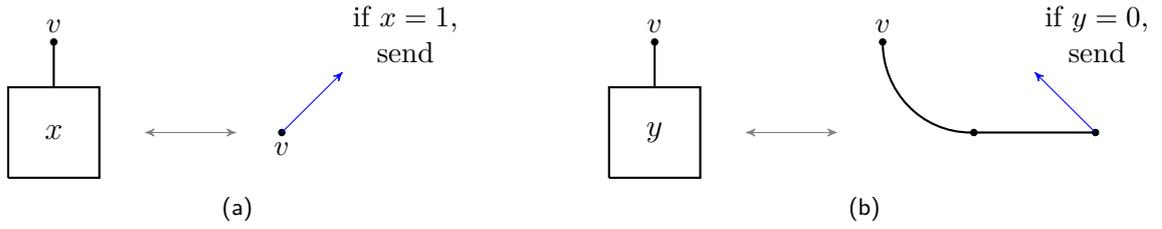

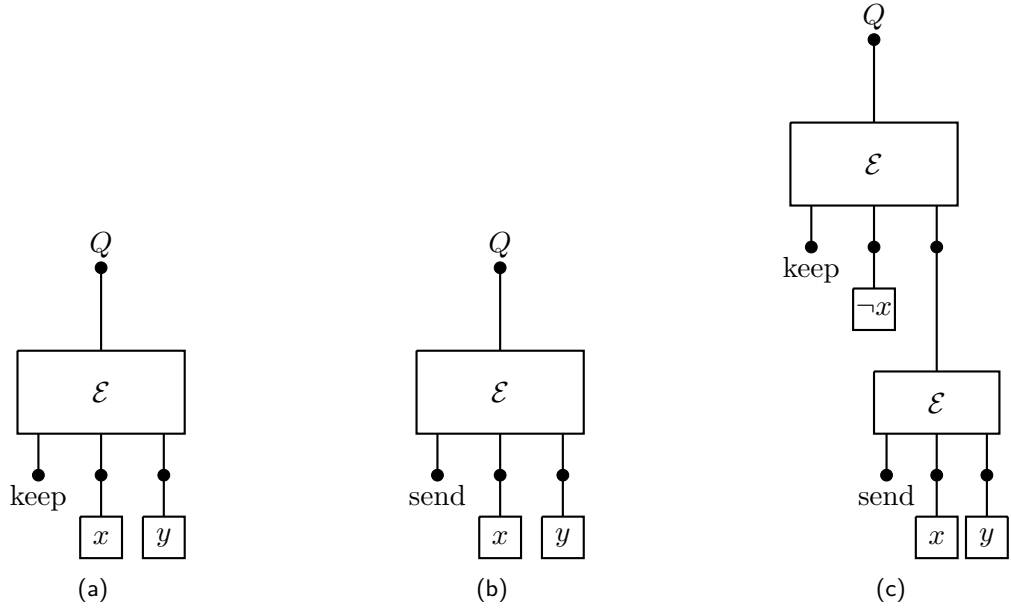
\begin{figure*}
    \centering
    \begin{subfigure}{0.3\textwidth}
    \centering
    \begin{tikzpicture}[scale=1.1]
    
    \draw[black,thick] (-3,3) -- (-1,3) -- (-1,2) -- (-3,2) -- (-3,3);
    \node at (-2,2.5) {$\mathcal{E}$};
    \draw[black,thick] (-2,4) -- (-2,3);
    \draw[black] plot [mark=*, mark size=2] coordinates{(-2,4)};
    
    \draw[black,thick] (-1.25,2) -- (-1.25,1.5);
    \draw[black,thick] (-2,2) -- (-2,1.5);
    \draw[black,thick] (-2.75,2) -- (-2.75,1.5);
    
    \draw[black] plot [mark=*, mark size=2] coordinates{(-1.25,1.5)};
    \draw[black] plot [mark=*, mark size=2]
    coordinates{(-2,1.5)};
    \draw[black] plot [mark=*, mark size=2] coordinates{(-2.75,1.5)};
    \node[below] at (-2.75,1.5) {keep};
    
    \node[above] at (-2,4) {$Q$};
    
    \draw[thick] (-1.5,1) -- (-1,1) -- (-1,0.5) -- (-1.5,0.5) -- (-1.5,1);
    \draw[thick] (-1.25,1.5) -- (-1.25,1);
    \node at (-1.25,0.75) {$y$};
    
    \draw[thick] (-2.25,1) -- (-1.75,1) -- (-1.75,0.5) -- (-2.25,0.5) -- (-2.25,1);
    \draw[thick] (-2,1.5) -- (-2,1);
    \node at (-2,0.75) {$x$};
    
    \end{tikzpicture}
    \caption{}
    \label{fig:AND}
    \end{subfigure}
    \hfill
    \begin{subfigure}{0.3\textwidth}
    \centering
    \begin{tikzpicture}[scale=1.1]
    
    \draw[black,thick] (-3,3) -- (-1,3) -- (-1,2) -- (-3,2) -- (-3,3);
    \node at (-2,2.5) {$\mathcal{E}$};
    \draw[black,thick] (-2,4) -- (-2,3);
    \draw[black] plot [mark=*, mark size=2] coordinates{(-2,4)};
    \node[above] at (-2,4) {$Q$};
    
    \draw[black,thick] (-1.25,2) -- (-1.25,1.5);
    \draw[black,thick] (-2,2) -- (-2,1.5);
    \draw[black,thick] (-2.75,2) -- (-2.75,1.5);
    
    \draw[black] plot [mark=*, mark size=2] coordinates{(-1.25,1.5)};
    \draw[black] plot [mark=*, mark size=2]
    coordinates{(-2,1.5)};
    \draw[black] plot [mark=*, mark size=2] coordinates{(-2.75,1.5)};
    
    \node[below] at (-2.75,1.5) {send};
    
    \draw[thick] (-1.5,1) -- (-1,1) -- (-1,0.5) -- (-1.5,0.5) -- (-1.5,1);
    \draw[thick] (-1.25,1.5) -- (-1.25,1);
    \node at (-1.25,0.75) {$y$};
    
    \draw[thick] (-2.25,1) -- (-1.75,1) -- (-1.75,0.5) -- (-2.25,0.5) -- (-2.25,1);
    \draw[thick] (-2,1.5) -- (-2,1);
    \node at (-2,0.75) {$x$};
    
    \end{tikzpicture}
    \caption{}
    \end{subfigure}
    \hfill
    \begin{subfigure}{0.3\textwidth}
    \centering
    \begin{tikzpicture}[scale=1.1]
    
    \draw[black,thick] (-3,3) -- (-1,3) -- (-1,2) -- (-3,2) -- (-3,3);
    \node at (-2,2.5) {$\mathcal{E}$};
    \draw[black,thick] (-2,4) -- (-2,3);
    \draw[black] plot [mark=*, mark size=2] coordinates{(-2,4)};
    \node[above] at (-2,4) {$Q$};
    
    \draw[black,thick] (-1.25,2) -- (-1.25,1.5);
    \draw[black,thick] (-2,2) -- (-2,1.5);
    \draw[black,thick] (-2.75,2) -- (-2.75,1.5);
    
    \draw[black] plot [mark=*, mark size=2] coordinates{(-1.25,1.5)};
    \draw[black] plot [mark=*, mark size=2]
    coordinates{(-2,1.5)};
    \draw[black] plot [mark=*, mark size=2] coordinates{(-2.75,1.5)};
    
    \node[below] at (-2.75,1.5) {keep};
    
    \draw[thick] (-1.25,1.5) -- (-1.25,0);
    
    \draw[thick] (-2.25,1) -- (-1.75,1) -- (-1.75,0.5) -- (-2.25,0.5) -- (-2.25,1);
    \draw[thick] (-2,1.5) -- (-2,1);
    \node at (-2,0.75) {$\neg x$};
    
    \draw[thick] (-2,0) -- (-0.5,0) -- (-0.5,-0.75) -- (-2,-0.75) -- (-2,0);
    \node at (-1.25,-0.375) {$\mathcal{E}$};
    
    \draw[thick] (-1.25,-0.75) -- (-1.25,-1.25);
    \draw[black] plot [mark=*, mark size=2] coordinates{(-1.25,-1.25)};
    
    \draw[thick] (-0.65,-0.75) -- (-0.65,-1.25);
    \draw[black] plot [mark=*, mark size=2] coordinates{(-0.65,-1.25)};
    
    \draw[thick] (-1.85,-0.75) -- (-1.85,-1.25);
    \draw[black] plot [mark=*, mark size=2] coordinates{(-1.85,-1.25)};
    \node[below] at (-1.85,-1.25) {send};
    
    \draw[thick] (-1.25,-1.25) -- (-1.25,-1.75);
    \draw[thick] (-0.65,-1.25) -- (-0.65,-1.75);
    
    \draw[thick] (-0.4,-1.75) -- (-0.9,-1.75) -- (-0.9,-2.25) -- (-0.4,-2.25) -- (-0.4,-1.75);
    \node at (-0.65,-2) {$y$};
    
    \draw[thick] (-1.5,-1.75) -- (-1,-1.75) -- (-1,-2.25) -- (-1.5,-2.25) -- (-1.5,-1.75);
    \node at (-1.25,-2) {$x$};
    
    \end{tikzpicture}
    \caption{}
    \label{fig:CRformula}
    \end{subfigure}
    \caption{Some simple code-routing protocols. The map $\mathcal{E}$ takes in the $Q$ system and records it into a 3 share secret sharing scheme where any 2 shares recover the secret. In the protocol, Alice$_0$, who initially holds $Q$, performs the encoding map $\mathcal{E}$. The lower boxes indicate the unit-routing protocol should be implemented on the attached shares. a) Code-routing protocol for computing $AND(x,y)$. The protocol tape describing this protocol consists of the tuples $S_1=(Q,\{A,B,C\},\mathcal{E})$, $S_2=(A,\{\},0)$, $S_3=(B,\{\},x)$, $S_4=(C,\{\},y)$. b) Code-routing protocol for $OR(x,y)$. c) Code-routing protocol for computing $f(x,y)=AND(NOT(x),OR(x,y))$. This method of concatenating codes to generalizes to arbitrary Boolean formulas. The entanglement cost is bounded above by the formula size.}
    \label{fig:CRexamples}
\end{figure*}

To understand code-routing, it will be helpful to begin with simple examples and build up to more elaborate constructions. 
Some basic examples of code-routing protocols are shown in \cref{fig:CRexamples}.
There, we $f$-route on the AND and OR functions using an erasure code on $3$ shares that corrects one erasure error.
The protocols for AND and OR given here can be compared to the garden-hose strategies for computing the same functions in \cref{fig:GHexamples}.

One convenient property of the code-routing strategy is that composition of functions is implemented in a simple way.
To see this, consider a simple example, which is easy to generalize.
Consider the function $f(x,y)=AND(NOT(x),OR(x,y))$.
To execute this in a code-routing protocol, one can use the code shown in \cref{fig:CRformula}.
Notice that we concatenate codes according to the pattern given by the Boolean formula for function $f(x,y)$.
This generalizes to any Boolean formula, although we must use DeMorgans' laws to move the NOT gates to the input layer.
This shows that the entanglement cost for code-routing on a function $f$ is bounded above by the formula size of $f$, where by formula size we mean the number of inputs to the formula, counted with repetition.\footnote{E.g. $f(x,y)=AND(NOT(x),OR(x,y))$ has size $3$.}

Building on the AND and OR examples, we can replace the simple threshold code with other, more structured examples. An interesting class of examples is constructed from quantum secret sharing schemes, which we review briefly before describing the protocol. 

A quantum secret sharing scheme is a quantum error-correcting code with the additional feature that collections of subsystems are either authorized, meaning they can be used to recover the encoded state, or unauthorized, meaning they reveal \emph{no} information about the state.
The set of authorized sets for a given secret sharing scheme is known as its \emph{access structure}.
Call the shares produced by the secret sharing scheme $\{v_i\}_i$.
Then the scheme's access structure defines a corresponding indicator function $f_I$ according to
\begin{align}
f_I(z)=
\begin{cases}
    0, & \bigcup_{z_i : z_i=1} v_{i}\,\,\, \text{is unauthorized} \\
    1, & \bigcup_{z_i : z_i=1} v_{i} \,\,\, \text{is authorized} \,
\end{cases}
\end{align}
All valid indicator functions satisfy two constraints.
First, the no-cloning theorem implies no two disjoint subsets can recover the state.
At the level of the indicator function, this is expressed as $f(z)=1\implies f(\bar{z})=0$.
Second, adding additional shares to a set never prevents recovery, which implies $f(z)$ is monotone.\footnote{A Boolean function $f(z)$ is said to be monotone if $x\preceq y \implies f(x)\leq f(y)$, where $x\preceq y$  means that $x_i\leq y_i$ for all $i$.} 
In \cite{gottesman2000theory}, it was shown that whenever the indicator function is no-cloning and monotone, it is possible to construct a corresponding quantum secret sharing scheme. 
Finally, define the \emph{size} of a quantum secret sharing scheme to be the sum of the log dimension of all the shares. For shares built from qubits, this is the total number of qubits the secret is encoded into. 

Using a code-routing protocol based on a single encoding of $Q$ into a quantum secret sharing scheme, we can prove the following theorem. 
\begin{theorem}
Consider an $f$-routing task where $f$ is a valid indicator function. Then the entanglement cost of completing the routing task for $f$ is upper bounded by the size of any quantum secret sharing scheme that has $f$ as its indicator function. 
\end{theorem}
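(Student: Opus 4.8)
The plan is to exhibit an explicit code-routing protocol consisting of a single encoding of $Q$ into the given secret sharing scheme, followed by one layer of unit-routing applied to each resulting share. Fix a quantum secret sharing scheme with shares $S_1,\dots,S_{2n}$ whose indicator function is $f$, and index the shares so that $S_1,\dots,S_n$ are controlled by the bits of $x$ and $S_{n+1},\dots,S_{2n}$ by the bits of $y$. Alice$_0$, who receives $Q$ and $x$, applies the encoding isometry $\mathcal{E}$ to $Q$ — a local operation that consumes no entanglement — producing all $2n$ shares on her side. She then unit-routes each share so that $S_i$ is delivered to the ``$1$'' side precisely when the bit controlling it is $1$. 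For $i\le n$ the controlling bit is one of Alice$_0$'s own input bits, so this is the entanglement-free variant of unit-routing in \cref{fig:unitroutinga}. For $i>n$ the controlling bit $y_{i-n}$ lies with Alice$_1$, so Alice$_0$ teleports $S_i$ across using $\log\dim S_i$ EPR pairs, Alice$_1$ keeps or forwards the teleported share according to $y_{i-n}$, and Alice$_0$ broadcasts her Bell-measurement outcome to both output points, as in \cref{fig:unitroutingb}; following \cref{def:crprotocol} the resulting Pauli correction is deferred to the final stage.

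After the single communication round the agent at $r_1$ holds exactly the set of shares $\{S_i : z_i=1\}$ and the agent at $r_0$ holds the complementary set $\{S_i : z_i=0\}$, where $z=(x,y)$ is the concatenated input; each also holds the classical data needed to undo the deferred teleportation corrections on the shares it ended up with. Now use that $f$, being the indicator function of an actual scheme, comes from a \emph{perfect} scheme, so the complement of an unauthorized set is authorized and the complement of an authorized set is unauthorized (the first point is the standard fact that purity of the encoded state makes a set decoupled from the reference have a recoverable complement); equivalently $f(z)=1\iff f(\bar z)=0$, which is stronger than the no-cloning constraint of a general valid indicator function but holds here. If $f(x,y)=1$, then $\{S_i : z_i=1\}$ is authorized, so after applying the Pauli corrections the agent at $r_1$ runs the recovery channel for this set, obtains a system maximally entangled with Bob's reference $\bar Q$, and returns it at $r_1=r_{f(x,y)}$. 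If $f(x,y)=0$, then $\bar z$ indexes the authorized set $\{S_i : z_i=0\}$, so the agent at $r_0=r_{f(x,y)}$ recovers in the same way. In both cases Bob's test passes with certainty, so the protocol $f$-routes perfectly.

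It remains to count entanglement. The only EPR pairs used are those teleporting the $y$-controlled shares, for a total of $\sum_{i=n+1}^{2n}\log\dim S_i\le\sum_{i=1}^{2n}\log\dim S_i$, which is exactly the size of the scheme; the encoding $\mathcal{E}$ and the recovery channels are local, and the outcome broadcasts are classical communication. Hence $E(f)$ is at most the size of the scheme, and since the scheme was an arbitrary one with indicator function $f$, $E(f)$ is at most the minimum size over all such schemes.

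The only substantive ingredient is the complementarity of perfect quantum secret sharing: it is what guarantees that whenever the authorized half of the shares is not routed to the ``$1$'' side, its complement is authorized and is routed to the ``$0$'' side. I expect the rest to be routine, the two small things to check carefully being that deferring the teleportation corrections (as \cref{def:crprotocol} permits) is harmless since they are single-share Paulis applied before any recovery map, and that the classical broadcast of each Bell outcome reaches whichever output point the corresponding share travels to — which is precisely what the unit-routing subroutine of \cref{fig:unitrouting} arranges.
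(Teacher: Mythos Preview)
Your approach is the same as the paper's: encode $Q$ once into the given secret sharing scheme, unit-route each share on its corresponding input bit, and count entanglement as the total $\log$-dimension of the $y$-controlled shares. The entanglement bookkeeping matches exactly.

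There is, however, a real gap in your correctness argument for the $f(x,y)=0$ branch. You assert that because $f$ is the indicator function of an actual perfect scheme, it must satisfy $f(z)=1\iff f(\bar z)=0$, justifying this by ``purity of the encoded state.'' This inference fails for mixed-state schemes. A valid indicator function need only be monotone and satisfy $f(z)=1\Rightarrow f(\bar z)=0$; the converse can fail, e.g.\ for the $3$-out-of-$4$ threshold function, which is realized by a perfectly good mixed-state scheme (discard one share of a pure $((3,5))$ threshold scheme) yet has $f(1,1,0,0)=f(0,0,1,1)=0$. On such an input your protocol, as written, leaves neither side holding an authorized set of shares, and the routing fails. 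Since the theorem is stated for \emph{any} scheme with indicator function $f$, you cannot restrict to pure-state encodings.

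The paper closes this gap by having Alice$_0$ apply the isometric extension of the encoding channel and retain the environment system $R$. Then when $f(x,y)=0$, Alice$_1$'s set is unauthorized and hence decoupled from the reference, so Alice$_0$ --- who holds the purifying complement, namely the remaining shares \emph{together with} $R$ --- recovers $Q$ by decoupling. You actually cite this principle in your parenthetical, but misapply it: decoupling guarantees recovery from the full purifying complement, not from the complementary shares alone, so it does not upgrade $f$ to a self-dual function. The fix is a one-line addition (keep $R$ on the left throughout), after which your argument goes through for arbitrary schemes; the entanglement count is unaffected since $R$ never moves.
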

\begin{proof}\,
Construct an $f$-routing protocol as follows. On the left, record $Q$ into a quantum secret sharing scheme with shares $\{v_1,...,v_{2n}\}$ and indicator function $f(x,y)$. In particular, use the isometric extension of the encoding map, and have Alice$_0$ hold the purifying system $R$. Then, for $1\leq i\leq n$ carry out the unit-routing protocol on each share $v_i$ with $x_i$ as input. For $n+1\leq i \leq 2n$ unit-route share $v_i$ on $y_i$. We will show this procedure correctly completes the $f$-routing task, and has entanglement cost upper bounded by the size of any secret sharing scheme with indicator $f$. 

For correctness, notice that by construction Alice$_1$ obtains the set of shares $K(x,y)\equiv \bigcup_{z_i : z_i=1} v_{i}$, and Alice$_0$ holds the purification of $K(x,y)$ (consisting of the remaining shares plus $R$). If $f(x,y)=1$, by construction we have that $K(x,y)$ is authorized, so Alice$_1$ recovers $Q$, which is correct. If $f(x,y)=0$ Alice$_1$ receives an unauthorized set of shares. This ensures all systems held by Alice$_1$ reveal nothing about $Q$. Since Alice$_0$ holds the purifying system, by decoupling \cite{schumacher1996quantum,schumacher2002approximate} we have that Alice$_0$ can recover $Q$. Again this is correct.

To understand the entanglement cost of this protocol, notice that the unit-routing of share $v_i$ for $i > n$ requires $\log_2 d_{v_i}$ EPR pairs. Unit-routing on shares $v_i$ for $i\leq n$ has no entanglement cost, since the needed bits $x_i$ are held locally. The total entanglement cost is just the entanglement cost of all the unit-routings, giving
\begin{align}
    E(f) \leq \sum_{n < i\leq 2n} \log_2 d_{v_i} \leq \sum_{ 1\leq i\leq 2n} \log_2 d_{v_i}.
\end{align}
The right hand side is just the size of the secret sharing scheme used, so we are done. 
\end{proof}

For a given indicator function, the most efficient quantum secret sharing scheme is the one due to Smith \cite{smith2000quantum}. In particular, Smith's scheme has size $O(mSP_p(f))$, where $mSP_p(f)$ is the size of a \emph{monotone span program} over $\mathbb{Z}_p$ that computes $f$. We define span programs in \cref{appendix:spanprograms}. This shows that for indicator functions the entanglement cost of $f$-routing is upper bounded by monotone span program size.

Next, we continue to progress towards more elaborate code-routing protocols, which will allow us to do code-routing for arbitrary functions, not just indicator functions. 
In particular, we will introduce unit-routings that direct a share based on the negation of one of the input bits, rather than an input bit directly, which will allow us to route on non-monotone functions. 
As well, we will route on functions which violate the no-cloning property by realizing them as restrictions of functions which do have the no-cloning property. 
Combining these tools we prove the following theorem. 

\begin{theorem}\label{thm:codeprotocol}
Using a code-routing protocol, the routing task can be completed for any function $f$ using a resource state consisting of $O(SP_{p,(2)}(f))$ maximally entangled qupits, where
\begin{align}
    SP_{p,(2)}(f)= \min_{h,\alpha,\beta} \{ SP_p(h):f(x,y)=h(\alpha(x),\beta(y))\},\nonumber 
\end{align}
and $SP_{p}(h)$ is the size the smallest span program over the field $\mathbb{Z}_p$ computing $h$.
\end{theorem}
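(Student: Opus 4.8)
The plan is to reduce \cref{thm:codeprotocol} to the secret-sharing bound proved above (entanglement cost $\leq$ size of a quantum secret sharing scheme with $f$ as indicator function), combined with Smith's bound that a scheme of size $O(mSP(f))$ exists for any valid indicator function. To get there from an \emph{arbitrary} $f$ I need two reductions on the span-program side: a literal-doubling trick that turns a $\mathbb{Z}_p$ span program into a monotone one, and a small padding trick that restores the no-cloning property. First, I would dispose of the preprocessing maps: by definition of $SP_{p,(2)}$ fix $h,\alpha,\beta$ with $f(x,y)=h(\alpha(x),\beta(y))$ and $SP_p(h)=SP_{p,(2)}(f)$; Alice$_0$ computes $a=\alpha(x)$ and Alice$_1$ computes $b=\beta(y)$ locally and for free, so it suffices to $f$-route on $h$ with inputs $(a,b)$. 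Write $m$ for the total number of bits of $(a,b)$ and $s=SP_p(h)$.

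Next I would perform the literal-doubling. Take the $\mathbb{Z}_p$ span program $(t,M,\rho)$ of size $s$ computing $h$ and build a monotone span program over $2m$ fresh variables $\{z_i^0,z_i^1\}$ by keeping $M$ and $t$ and relabelling each row whose literal is $x_i$ (resp.\ $\neg x_i$) by the positive variable $z_i^1$ (resp.\ $z_i^0$). This computes a monotone function $\hat h$ on $\{0,1\}^{2m}$ with $mSP(\hat h)\leq s$, and on the ``consistent'' assignment $W(z)=\{z_i^{z_i}\}$ one has $\hat h(W(z))=h(z)$. The function $\hat h$ need not be a valid indicator function: it may badly violate no-cloning (already $h\equiv 1$ forces $\hat h\equiv 1$). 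To fix this I introduce one further variable $v$ and set $\tilde h=\hat h\wedge v$. Then $\tilde h$ is monotone, and no-cloning because $\tilde h(w,v)=1\Rightarrow v=1\Rightarrow\tilde h(\neg w,0)=0$; hence it is a legitimate indicator function, and $mSP(\tilde h)\leq mSP(\hat h)+1\leq s+1$. In other words $\hat h$ is realized as the restriction $\tilde h|_{v=1}$ of a function that \emph{does} satisfy no-cloning, so Smith's construction gives a quantum secret sharing scheme for $\tilde h$ of size $O(mSP(\tilde h))=O(s)$ qukits.

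Now I would write down the code-routing protocol, mirroring the proof of the preceding theorem. Alice$_0$ records $Q$ into Smith's scheme for $\tilde h$ using its isometric extension and keeps the purifying system; she then unit-routes the share labelled $z_i^1$ on the bit $z_i$ and the share labelled $z_i^0$ on the bit $\neg z_i$ (the latter using the negated-literal form of unit-routing described in \cref{fig:unitrouting}), and sends the single extra share $v$ to Alice$_1$. On input $z=(a,b)$, Alice$_1$ then holds exactly the shares indexed by $W(z)$ together with the $v$-share, whose indicator value is $\hat h(W(z))=h(z)=f(x,y)$. So when $f=1$ her set is authorized and she recovers $Q$ at $r_1$; when $f=0$ her set is unauthorized, and since Alice$_0$ holds the purifying system, decoupling lets Alice$_0$ recover $Q$ at $r_0$ — exactly as in the proof above. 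For the resource count: the unit-routings on literals of $a$ are free since Alice$_0$ holds those bits, while unit-routing a literal-share of $b$ and moving the $v$-share each cost at most $\log d$ qukit pairs for that share, so the total is at most the size of the Smith scheme, namely $O(mSP(\tilde h))=O(s)=O(SP_{p,(2)}(f))$ maximally entangled qukits.

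I expect the only genuinely delicate point — and the place the naive approach fails — to be the no-cloning condition: without the $\wedge v$ padding, Smith's scheme simply does not exist for $\hat h$ whenever $h$ is not already an indicator function, so recognizing that $\hat h$ can be presented as a restriction of the no-cloning function $\tilde h$, and that the restricted coordinate is harmless because its share is always routed to one fixed side, is the crux. The remaining ingredients — the literal-doubling identity $\hat h(W(z))=h(z)$, the size bookkeeping $mSP(\tilde h)\le s+1$, the invocation of Smith's $O(mSP)$ bound, and the entanglement accounting — are straightforward adaptations of material already in the excerpt.
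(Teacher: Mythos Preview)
Your proposal is correct and matches the paper's proof essentially step for step: the paper packages your literal-doubling and $\wedge v$ padding into \cref{lemma:decomposition} (with $f'(z,b)=f(z)\wedge b$ and $g:z_i\mapsto(z_i,\neg z_i)$, yielding $mSP_p(f_I)\leq SP_p(f)+1$), then invokes Smith's construction and the same unit-routing protocol with the extra share always sent right, followed by the identical correctness-via-decoupling and entanglement-counting arguments you give.
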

In the next section we show that span program size is no larger than the entanglement cost in the garden-hose protocol, and given some complexity theoretic assumptions is smaller. 

Towards proving this theorem, we build a routing protocol in the following way.
We show that $f$ can be expressed as $f(z)=(f_I\circ g)(z,b)$, with $b$ a single bit, $z=(x,y)$, $g$ maps $(z,b)$ to $(z,\neg z, b)$, and $f_I$ an indicator function. We state this in the next lemma. 
\begin{lemma}
\label{lemma:decomposition}
\,Given a function $f:\{0,1\}^{m}\rightarrow \{0,1\}$, there exist functions 
\begin{align}
    f' &: \{0,1\}^{m+1} \rightarrow \{0,1\}, \nonumber \\
    f_I &:\{0,1\}^{2m+1}\rightarrow \{0,1\}, \nonumber \\
    g &:\{0,1\}^{m+1} \rightarrow \{0,1\}^{2m+1}, \nonumber 
\end{align}  
such that 
\begin{itemize*}
    \item $f'(z,1) = f(z)$
    \item $f'(z,b) = f_I\circ g(z,b)$
    \item $f_I$ is a valid indicator function
    \item $g$ acts on the first $m$ bits of its input by copying each bit $z_i$ and negating one copy, $z_i\rightarrow (z_i,\neg z_i)$. It leaves the final bit $b$ unchanged.
    \item $mSP_p(f_I)\leq SP_p(f)+1$, where $SP_p(h)$ denotes the minimal size of a span program over $\mathbb{Z}_p$ computing $h$, and $mSP_p(h)$ the size of a monotone span program computing $h$.
\end{itemize*}
\end{lemma}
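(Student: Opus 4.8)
The plan is to carry out the non-monotone-to-monotone conversion \emph{at the level of a span program}, which is size-preserving, rather than at the level of the Boolean function, where it can blow up exponentially. I would start from an optimal span program $P$ over $\mathbb{Z}_p$ computing $f$, of size $s=SP_p(f)$; each of its rows is labelled by a literal $x_i$ or $\neg x_i$ with $i=1,\dots,m$. Build a new program $P'$ on $2m$ fresh variables $u_1,\dots,u_m,v_1,\dots,v_m$ by keeping all the vectors and the target of $P$ and only \emph{relabelling}: a row labelled $x_i$ becomes labelled $u_i$, and a row labelled $\neg x_i$ becomes labelled $v_i$. Since every label of $P'$ is now a positive literal, $P'$ is a monotone span program, still of size $s$; let $g_P:\{0,1\}^{2m}\to\{0,1\}$ be the function it computes. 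The key point to verify is the identity $g_P(z,\neg z)=f(z)$: under the substitution $v=\neg z$, the set of rows of $P'$ whose label is satisfied is exactly the relabelling of the set of rows of $P$ whose literal is satisfied by $z$, so (same vectors, same target) the two programs accept the same inputs. Monotonicity of $g_P$ is immediate from $P'$ having only positive labels.

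Next I would introduce the single extra bit $b$ purely to repair the no-cloning property. Take $g:\{0,1\}^{m+1}\to\{0,1\}^{2m+1}$ to be the map that replaces each $z_i$ by the pair $(z_i,\neg z_i)$ and leaves $b$ unchanged, and index the $2m+1$ coordinates of $f_I$ so that under $g$ one has $u_i=z_i$, $v_i=\neg z_i$, and the last coordinate is $b$. Then define
\begin{align}
f_I(u,v,b) &= b \wedge g_P(u,v), \nonumber \\
f'(z,b) &= f_I(g(z,b)) = b \wedge f(z). \nonumber
\end{align}
With these definitions $f'(z,1)=f(z)$ and $f'(z,b)=f_I\circ g(z,b)$ hold by construction, and $g$ has the prescribed form. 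The function $f_I$ is monotone, being an AND of the monotone functions $b\mapsto b$ and $g_P$. For no-cloning: $f_I(u,v,b)=1$ forces $b=1$, so the complementary input has last bit $0$ and hence $f_I(\neg u,\neg v,0)=0$, which is exactly the condition $f_I(w)=1\implies f_I(\bar w)=0$. By the characterisation of valid indicator functions recalled above (monotone plus no-cloning, which suffice by \cite{gottesman2000theory}), $f_I$ is a valid indicator function.

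For the size bound I would exhibit a monotone span program for $f_I=b\wedge g_P$ by the standard direct-sum construction for the AND of two span programs: combine $P'$ (size $s$, computing $g_P$) with the trivial one-row span program computing the single variable $b$. Placing the two programs in a block-diagonal fashion with the concatenated target makes the target reachable exactly when it is reachable in both blocks, so the resulting program computes $b\wedge g_P$, is monotone, and has size $s+1$. Hence $mSP_p(f_I)\le SP_p(f)+1$, completing all five bullet points.

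The only genuine content is the relabelling observation together with the identity $g_P(z,\neg z)=f(z)$; everything else is bookkeeping. I expect the point to be careful about is matching the precise definition of span program and of span-program size used in \cref{appendix:spanprograms}, so that "relabel a negative literal as a fresh positive variable" is literally a legal move that preserves size and so that the AND composition adds exactly one to the size — and double-checking that the $b\wedge(\cdot)$ wrapper really delivers \emph{both} defining properties of an indicator function simultaneously.
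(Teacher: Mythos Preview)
Your proposal is correct and matches the paper's proof essentially step for step: the paper also starts from an optimal span program for $f$, performs the block-diagonal AND with a single fresh variable $b$ (adding one row and one column and appending a $1$ to the target), and relabels each row tagged $(j,\varepsilon)$ to the positive literal $z_j$ or $\neg z_j$ so as to obtain a monotone program of size $SP_p(f)+1$; the no-cloning argument ($f_I=1\Rightarrow b=1\Rightarrow$ complement has $b=0\Rightarrow f_I=0$) is identical. The only cosmetic difference is the order in which you apply the two moves --- you relabel first and then AND with $b$, whereas the paper ANDs first and then relabels --- but the two operations commute and produce the same monotone span program for $f_I$.
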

\noindent We prove this lemma in \cref{appendix:decompositionlemma}. 

Using this lemma, we are ready to prove \cref{thm:codeprotocol}. 

\vspace{0.5cm}
\begin{proof}\, \textbf{(Of \cref{thm:codeprotocol})}
Let the function we will perform the routing task on be $f(x,y)$.
We can first allow Alice$_0$ and Alice$_1$ to apply local functions to their strings $x$ and $y$, producing new strings $\alpha(x)$ and $\beta(y)$.
These are chosen, along with a function $h$, such that $f(x,y)=h(\alpha(x),\beta(y))$.
Let $m=|\alpha|+|\beta|$.

\Cref{lemma:decomposition} gives that we can realize $h$ as a restriction of $h'(z,b)=h_I\circ g(z,b)$, with $h_I$ a valid indicator function, and $g(z)$ mapping $m+1$ bits to $2m+1$ bits.
For the indicator function $h_I$, use the construction in Ref.~\cite{smith2000quantum} to find an encoding map $\mathcal{E}_{Q\rightarrow V}$ which prepares a secret sharing scheme with access structure corresponding to $h_I$.

The protocol is as follows.
After receiving $Q$, Alice$_0$ applies the isometric extension of the encoding channel, call it $V_{Q\rightarrow VE}^{\mathcal{E}}$.
This produces output systems $v_i$, $1 \leq i \leq 2m+1$, and $E$.
The environment system $E$ is retained by Alice$_0$.
Then, Alice$_0$ and Alice$_1$ carry out the unit-routing protocol (see \cref{fig:unitrouting}) to bring share $v_i$ to Alice$_{g(z)_i}$, where by $g(z)_i$ we mean the $i$th bit of $g(z)$.\footnote{Recall that the $i$th bit of $g(z)$ is either a bit of the input $z$ or a negated copy, such that $z_j = g(z)_{2j}$ if $i$ is even or $\neg z_j = g(z)_{2j+1}$ if $i$ is odd.}
Note that we always take $z_{2m+1}=b=1$ and $g$ to always act trivially on this bit, so that share $v_{2m+1}$ is always sent to Alice$_1$.

Next we verify that this protocol works correctly, in that $Q$ will be recovered on Alice$_{f(z)}$'s side.
Consider that Alice$_1$ holds all those shares $v_i$ such that $g(z_i)_i=1$.
If this is an authorized set, she will be able to recover $Q$.
By design, this occurs exactly when $h'(z,1)=h_I(g(z,1))=1$, and by construction $h'(z,1)=h(z)$, so this is correct.
Alternatively if the set of shares $v_i$ such that $g(z_i)=1$ is unauthorized, then Alice$_1$'s systems reveal nothing about the encoded state.
Because Alice$_0$ performed the encoding procedure isometrically and retained the environment, decoupling ensures that Alice$_0$ can now recover the state.
This occurs exactly when $h'(z,1)=h_I(g(z,1))=0$, so $h(z)=0$, and again this is correct.

Finally we determine the entanglement cost of performing this protocol.
All the entanglement use occurs in teleporting shares $v_i$, $2|\alpha| < i \leq  2 m$ from Alice$_0$ to Alice$_1$, which occurs as part of the unit-routing protocol.
The required entanglement depends on the size of the shares $v_i$, which in turn depends on the details of the secret-sharing scheme construction.
Specifically, the protocol can be performed using not more than
\begin{align}
    \sum_{2|\alpha| < i\leq 2 m} \log_k d_{v_i} \leq \sum_{1 < i \leq 2 m} \log_k d_{v_i}
\end{align}
maximally entangled pairs of qupits.
For the construction of Ref.~\cite{smith2000quantum}, this is at most $(2 mSP_p(h_I)+1)$.
From \cref{lemma:decomposition} we have also that $mSP_p(h_I)\leq SP_p(h)+1$, completing the proof. 
\end{proof}

\section{Entanglement and complexity in code-routing}

\subsection{Lower bounds on efficiently achievable complexity}\label{sec:lowerbounds}

In the last section we saw that the code-based protocol can carry out a routing task using at most $O(SP_{p,(2)}(f))$ maximally entangled pairs of qupits, where $SP_{p,(2)}(f)$ is the minimal size of a span program over $\mathbb{Z}_p$ (with $p$ prime) that computes the non-local part of $f$.
To capture the set of functions that can be performed using reasonable amounts of entanglement with this strategy, we define the following complexity classes. 
\begin{definition}
For prime $p$, $\PSP_p$ is the set of families of functions $f_n:\{0,1\}^n\rightarrow \{0,1\}$ that can be computed using span programs over the field $\mathbb{Z}_p$ of size polynomial in $n$. 
\end{definition}
\begin{definition}
$\PSP_{p,(2)}$ is the set of families of functions $f_n:\{0,1\}^{2n}\rightarrow \{0,1\}$ which can be computed in the form $f_n(x,y)=h_n(\alpha(x),\beta(y))$ with $h_n\in \PSP_p$. 
\end{definition}

\Cref{thm:codeprotocol} establishes that the routing task can be completed with polynomial EPR pairs for a function family $\{f_n\}$ at least when it is in the class $\PSP_{p,(2)}$, for any prime $p$.
This gives that the class of functions efficiently implementable in the code-routing strategy is at least $\cup_{\text{prime}\,\, p} \PSP_{p,(2)}$.
We are interested in the relationship between this class and $\text{L}_{(2)}$, which is the class of functions that can be computed non-locally in the garden-hose model (the most efficient previously known protocol) with polynomial entanglement.
In the next two sections we give evidence that $\text{L}_{(2)} \subsetneq \PSP_{p,(2)}$, so that code-routing improves on the garden-hose model.\footnote{In the introduction we make the statement that code-routing achieves the class $\modkl_{(2)}$, and that $L \subseteq \modkl$. In fact $PSP_p=\modkl$ as we discuss in this section, so this is the same statement as is made here.}

\vspace{0.3cm}
\noindent \textbf{L and $\PSP_p$}
\vspace{0.3cm}

We will start by considering the classes without local pre-processing of the inputs, $\text{L}$ and $\PSP_p$.
 It is believed that $\text{L} \subsetneq \PSP_p$.
 To understand why, we first need to introduce a few related complexity classes, $\NL$, $\text{UL}$, and $\modkl$.

To understand these classes, recall the notion of a non-deterministic Turing machine.
Such a machine may, at each step, choose to follow one or more computational paths.
For a "yes" instance, we just require that at least one of these paths be accepted.
This contrasts with a deterministic machine, which follows exactly one path.
For example, consider the directed graph connectivity problem:

\vspace{0.3cm}
\noindent \textbf{DAG}
\begin{itemize*}
    \item Input: A directed acyclic graph $G$, and a designation of two nodes in the graph, called $s$ and $t$.
    \item Output: $1$ if there exists at least one path from $s$ to $t$ in $G$, $0$ otherwise. 
\end{itemize*}
Starting at node $s$, a non-deterministic machine can solve $\text{DAG}$ by following every outward edge from $s$, and every outward edge from each subsequent node, etc.
The machine accepts if any of these computational branches reaches $t$.
We can restrict the computational power of the machine by requiring each branch, separately, run in a restricted amount of time or use a restricted amount of memory.

$\NL$ is the class of decision problems solvable on a non-deterministic Turing machine with $O(\log n)$ memory, where $n$ is the length of the input.
$\text{UL}$ is the class of decision problems solvable on a non-deterministic Turing machine with logarithmic memory, but requiring that exactly one branch accept on ``yes'' instances, and zero branches accept on ``no'' instances.
Finally, recall that $\text{L}$ is the class of decision problems that can be decided in $O(\log n)$ space on a deterministic Turing machine.
It is clear that $\text{L} \subseteq \text{UL}$, because a deterministic machine is a special case of a non-deterministic one, and the deterministic machine has just one computation path, and so in particular one accepting path.
 
It's also immediate that $\text{UL} \subseteq \NL$, because machines with one accepting path are special cases of the general non-deterministic one.

Finally, we consider $\modkl$, for $p$ prime.
This has an unusual definition, but turns out to capture the complexity of a number of natural problems.
$\modkl$ is the class of decision problems which can be solved by running a non-deterministic Turing machine and outputting "yes" when the number of accepting paths in that machine is non-zero mod $p$, and outputting "no" otherwise.
An example of a problem in this class is the following.

\vspace{0.3cm}
\noindent \textbf{DAG$_p$}
\begin{itemize*}
    \item Input: A directed acyclic graph $G$, and a designation of two nodes in the graph, called $s$ and $t$.
    \item Output: $1$ if the number of distinct paths from $s$ to $t$ in $G$ is non-zero mod $p$, and $0$ otherwise. 
\end{itemize*}
More relevantly, Ref.~\cite{buntrock1992structure} proved that $\modkl$ includes many natural linear algebra questions over the field $\mathbb{Z}_p$, including inverting and powering matrices, calculating the rank of a matrix and others.
To relate this to our earlier classes, note that a $\text{UL}$ machine on "yes" instances has one accepting path, so in particular $1$ mod $p$ accepting paths, so any problem in $\text{UL}$ can be decided in $\modkl$ so that $\text{UL} \subseteq \modkl$.
Together with $\text{L}\subseteq \text{UL}$, this also implies that $\text{L}\subseteq \modkl$ as mentioned in the introduction.

In Ref.~\cite{karchmer1993span}, it was pointed out that running a span program of polynomial size is in $\modkl$, and in fact every problem in $\modkl$ can be reduced in an efficient way to running a span program.
Consequently, we have
$$\PSP_p = \modkl \label{eq:psp}.$$
As a consequence of this, it is also true that a span program with $d$ rows can be computed by running a Turing machine with $O(\log d)$ memory, and outputting $0$ iff the number of accepting paths is non-zero mod $p$.

Using this, we can relate the classes $\text{L}$ and $\PSP_p$ according to
\begin{align}
    \text{L} \subseteq \text{UL} \subseteq \modkl = \PSP_p.
\end{align}
It is also believed that $\text{L}\subsetneq \NL$, and that $\text{UL}=\NL$.
Assuming both these statements, we would have that $\text{L} \subsetneq \PSP_p$.
We motivate these beliefs below.

First consider the claim $\text{L}\subsetneq \NL$. 
This is widely believed, similar to the belief that $\P \subsetneq \text{NP}$. 
It amounts to the statement that allowing a log space Turing machine to follow many computational paths at once adds power.
One line of evidence for $\text{L}\subsetneq \NL$ is the theory of NL-completeness.
Many problems \cite{jones1976new} are known to be NL-complete, meaning any problem in $\NL$ can be mapped to them using a log space mapping.
If $\text{L}$ is equal to $\NL$, then all of these problems have a log space solution, but no such solution is known for any of them.
Concretely, the DAG problem described above is $\NL$-complete.
This means the claim that $\text{L}\subsetneq \NL$ amounts to the statement that we cannot solve this problem in log space without non-determinism.

The second claim is that $\text{UL}=\NL$.
As mentioned above, it is immediate that $\text{UL} \subseteq \NL$, so it remains to understand the evidence for $\NL \subseteq \text{UL}$.
This was discussed in Ref.~\cite{reinhardt2000making,allender1999isolation}, where they pose the question in terms of the DAG problem. We summarize their argument briefly. 
First notice that since DAG is $\NL$-complete, if we can show it is in $\text{UL}$ we are done.
The problem then is to, given a directed graph $G$, define a non-deterministic Turing machine $M$ that has exactly one accepting computational path when there are any number $P\geq 1$ of paths in $G$ from $s$ to $t$, and no accepting computational paths otherwise.
It is not known how to solve this problem in this form.
However, consider rather than a $\text{UL}$ machine, a $\text{UL}$ machine which additionally has access to an advice string, which here will be a list of randomized weightings assigned to the edges of $G$.
Then, one uses that after assigning random weightings to the edges with high probability there will be a unique minimal weight path in $G$ from $s$ to $t$.
We build the machine $M$ to only accept on this minimal weight path, which gives it a single accepting computational path.
 
We can modify this construction to ensure it works with probability one. 
In particular, there exists a log-space computable function which maps from the advice string and the graph $G$ to a set of $n^2$ graphs $G_i$, each of which is a weighted version of $G$, such that for any graph $G$ at least one of the $G_i$ has a unique minimal weight path.
By exploiting the uniqueness of this path, one can solve DAG in $\text{UL}$.
The reader should refer to Ref.~\cite{reinhardt2000making} for more details.
 
It remains to remove the need for the $\text{UL}$ machine to access the advice string.
In Ref.~\cite{allender1999isolation}, it was shown that this can be done if suitable pseudo-random functions exist.
A pseudo-random function is one whose outputs are hard, in a suitable sense, to distinguish from completely random outputs.
In particular it is thought that there are pseudo-random functions that are much easier to compute than they are to distinguish from randomness.
In the construction above, we used an advice string assigning random weights to the edges in $G$.
We consider replacing this with an assignment by a pseudo-random function $p(x)$ which is computable in log space.
This assignment can be made by our $\text{UL}$ machine.
Then either there is a $p(x)$ which will create a graph $G_i$ with a unique minimal weight path, or distinguishing $p(x)$ from a truly random one is no harder than checking that all the $G_i$ have non-unique minimal weighted paths.
Given what is believed about pseudo-random functions, checking if the $G_i$ have unique minimal weight paths would too easily distinguish $p(x)$ from random, so we expect there is a log-space computable function that assigns suitable weightings.
From this we conclude that $\NL=\text{UL}$.

\vspace{0.3cm}
\noindent \textbf{L$_{(2)}$ and $\PSP_{p,(2)}$}
\vspace{0.3cm} 

In the last section we gave evidence, based on the existence of suitable pseudorandom functions, that $\text{L}\subsetneq \PSP_p$. 
Unfortunately, we cannot offer similar evidence separating $\text{L}_{(2)}$ and $\PSP_{p,(2)}$, although we believe this is the case.
More generally, for any classes A, B such that $\text{A}\subsetneq \text{B}$ it is unclear when $\text{A}_{(2)}\subsetneq \text{B}_{(2)}$.
We offer only some comments on this problem.

To understand this separation problem better, first of all consider some cases where A and B \emph{do} collapse under local pre-processing.
Trivial examples occur whenever one of two conditions are met.
If there is a promise that the inputs are of the form $(x,x)$, so that both local pre-processors see the full input, then the pre-processed classes A$_{(2)}$ and B$_{(2)}$ both become equal to the set of all functions, since we can have $\alpha$ or $\beta$ carry out the entire computation.
Another collapse occurs when the class B is defined by taking A and allowing for an advice string.
In that case having $\alpha(x)=(x,a)$ for $a$ the advice string and $\beta(y)=y$ collapses the classes.
For example\footnote{Recall that $L / poly$ is the class of functions computable in log-space with access to a polynomial size advice string.}, $\text{L} \subsetneq \text{L} / poly$ but this reasoning shows L$_{(2)}= \text{L}/ poly_{(2)}$.
Our example of $\text{A}=\text{L}$ and $\text{B}=\modkl$ does not have either of these features, so at the very least it cannot be obviously collapsed in either of these ways.

Another observation is that, when allowing arbitrary pre-processing, \emph{all} functions are contained in PSPACE$_{(2)}$. 
To see why, take $\alpha(x) = (x,f(x,y_1),...,f(x,y_{2^n}))$ and $\beta(y)=y$. 
Then, the local processor need only look up the $y$th element of the string $f(x,y_1),...,f(x,y_{2^n})$ and output the corresponding bit, and this can be done in $PSPACE$. 
This means for example that $PSPACE\subseteq EXP$ which is believed strict, but $PSPACE_{(2)}= EXP_{(2)}$. 
Because our classes $\modkl$ and $L$ are so much weaker than $PSPACE$, we do not believe a collapse by any similar mechanism is plausible in our case. 
 
To argue that a maintained separation under pre-processing is at least possible for some classes A and B, we prove such a separation in other cases.
Such separations are easy to prove for some low-lying complexity classes using tools from \emph{communication complexity}.
To define communication complexity, consider the following scenario.
Alice is given a string $x$, and Bob a string $y$.
Alice and Bob will communicate by sending classical bits to one another with the goal of determining the output of some Boolean function $f(x,y)$.
Unlike in a non-local computation scenario, they can communicate over many rounds.
Alice sends Bob a message, then, conditioned on the message he receives, Bob sends Alice a message, etc.
The communication complexity is then the total amount of information transferred from Bob to Alice plus the information sent from Alice to Bob. See Ref.~\cite{kushilevitz1997communication} for an introduction to communication complexity.

To understand why communication complexity can be used to separate classes with pre-processing, we first need to define the notion of a \emph{decision tree}.
A decision tree defines a simple type of program for computing a Boolean function on $n$ bits.
It consists of a directed tree\footnote{Recall that in graph theory, a directed tree is a directed acyclic graph whose underlying undirected graph is a tree, while a tree is an acyclic connected undirected graph.} such that except for the leaves and one other vertex specified as the \emph{root}, every vertex has one edge in and two edges out; a set of queries $Q$ consisting of functions of $O(1)$ input bits; a query $q_v\in Q$ for each non-leaf vertex $v$ in the graph; and a label for each leaf as either $0$ or $1$.
Starting at the root, for each node $v$ in the tree, the program checks the corresponding $q_v$ of the inputs.
Based on if that condition is true or false, it moves to the left or right branch from the current node.
Eventually the program reaches a leaf of the graph, and outputs the label of that leaf.
 
Decision tree size is related to communication complexity via the bound \cite{nisan1993communication}
\begin{align}\label{eq:ccbound}
    \dt(f(x,y)) \geq D(f(x,y)) / c_Q
\end{align}
where $D(f(x,y))$ is the communication complexity of the function $f(x,y)$, and $\dt(f(x,y))$ is the minimal depth of a decision tree computing $f$ using the set of queries $Q$.
The constant $c_Q$ is defined by $c_Q = \max_{q\in Q} D(q)$, the communication complexity of an individual query in the worst case. Briefly, this bound holds because a decision tree can be converted into a communication protocol: starting at the root, Alice and Bob communicate to evaluate the first query. This has communication cost at most $c_Q$. Given the output from this query, they follow the decision tree to the next node, and carry out another communication protocol to evaluate the next query. The communication cost is at most $c_Q$ times the depth of the tree $\dt(f(x,y))$, and this bounds the cost of the best possible protocol $D(f(x,y))$ from above. 

Define the complexity class $\DTq(F(n))$, consisting of problems solvable using decision trees with depth $O(F(n))$, and using queries $q$ drawn from some set $Q$. 
We claim that $\DTq(\sqrt{n})\subsetneq \DTq(n)$, and that $\DTq_{(2)}(\sqrt{n}) \subsetneq \DTq_{(2)}(n)$. 
We take the set of queries to be any relation on $O(1)$ inputs, in which case $c_Q=O(1)$. 
To show the first separation, consider the disjointness function
\begin{align}
    f_{disj}(x,y) = \begin{cases}
        1 \,\,\,\,\, \forall i, x_i\wedge y_i = 0 \\
        0 \,\,\,\,\, \text{otherwise}
    \end{cases}.
\end{align}
This has an obvious decision tree of size $n$: each node $n_i$ checks $x_i\wedge y_i$, with the output from that node labelled $0$ going to a leaf labelled $0$, and the output from $n_i$ labelled $1$ mapping to node $n_{i+1}$. 
This shows $f_{disj}(x,y)\in \DTq(n)$. 
As well, it is easy to show using lower bounds on communication complexity that $D(f_{disj}(x,y))\geq n$, so from the bound \ref{eq:ccbound} we get that $f_{disj}(x,y)\not\in \DTq(\sqrt{n})$, separating the two classes. 

Finally, we show the separation between the corresponding locally pre-processed classes. First, note that $f_{disj}(x,y)\in \DTq_{(2)}(n)$, since it is in the smaller class $\DTq(n)$. Next, suppose by way of contradiction that $f_{disj}(x,y) \in \DTq_{(2)}(\sqrt{n})$. Then there exists a function $F\in \DTq(\sqrt{n})$ such that $f_{disj}(x,y) = F(\alpha(x),\beta(y))$. But then
\begin{align}
    n \leq D(f_{disj})\leq D(F)
\end{align}
where the first inequality we mentioned above and is easy to prove in communication complexity, and the second inequality is immediate, because the definition of communication complexity allows for local pre-processing with arbitrary functions. Using \cref{eq:ccbound} and $f_{disj}(x,y) \in \DTq_{(2)}(\sqrt{n})$, we have
\begin{align}
    D(F)\leq c_Q \, \dt(F) \leq O(\sqrt{n})
\end{align}
which is a contradiction, so there is no such function $F$. This shows $f_{disj}\not\in \DTq_{(2)}(\sqrt{n})$, so $\DTq_{(2)}(\sqrt{n}) \subsetneq \DTq_{(2)}(n)$. 

While the strategy used above is natural to apply to our notion of local pre-processing, it cannot be applied to the classes $\text{L}$ and $\modkl$. This is because $\text{L}$ includes problems which require super-linear decision trees, and $D(f)\leq 2n$ always.\footnote{Using $2n$ bits of communication, Alice and Bob can send each other their full input strings.} This means we cannot hope to separate $\text{L}$ from a larger class using the bound \ref{eq:ccbound}. The technique does generalize to separate $\DTq$ classes of size less than $n$ however, by finding a function with suitable communication complexity, which can always be found.\footnote{For example, the disjointness function on some portion of the inputs of size $f(n)$ has communication complexity $f(n)$.} At least for these classes then, adding more computation power to the local computation makes the pre-processed classes larger. Our code-routing protocol improves on the garden-hose strategy if this remains true for the larger classes $\text{L}$ and $\modkl$. Understanding this for these or other classes however appears challenging, and we have not encountered any techniques for doing so which apply to $\text{L}$ and $\modkl$. 

\subsection{Upper bounds on efficiently achievable complexity}\label{sec:converse}

\Cref{thm:codeprotocol} lower bounds the complexity of functions that can be completed using code-routing protocols, showing it completes the routing task non-locally at least for functions in $\modkl_{(2)}$, when restricted to polynomial entanglement.
The protocol used to establish this is a restricted one however, and it is natural to ask if the more general procedure can complete functions of higher complexity.
 To increase the power of the code-routing strategy, we could:
\begin{itemize}
    \item \emph{Use other codes.} The codes we used that arise from Smith's construction \cite{smith2000quantum} (``Smith codes''), are CSS codes,\footnote{We have not found this statement in the literature but it is easy to verify.
    In fact, every CSS code is also a Smith code, as we discuss in an upcoming work.
    } so it is clear they are a restrictive set. 
    \item \emph{Unit-route on predominantly locally-held bits.} If most unit-routing is done on bits held by the other player, then the entanglement cost from the necessary teleportations is closely related to the total share size of the codes used. But by unit-routing many shares on locally-held bits, the total share size may not capture the entanglement cost. 
    \item \emph{Use adaptive encoding.} To prove \cref{thm:codeprotocol}, we used a single, fixed encoding on Alice$_0$'s side. More generally, which encoding is performed can depend on the classical inputs. As well, shares teleported to Alice$_1$'s side could be themselves encoded, shares from those teleported back and encoded, etc. 
\end{itemize}
We are not able to fully characterize the complexity of functions that can be achieved with polynomial entanglement using a general combination of the above strategies.
We are able however to give a few partial results.
To phrase our results, it is helpful to have a notion of size for a protocol. 
The protocol tape $I$ for a given set of inputs $(x,y)$ (see \cref{def:crprotocol}), defines a pattern of encoding that we refer to as the \emph{protocol tree}.
Each $S_i$ defines a vertex in a directed tree with inputs $v_i$ and outputs $\{w_i^j\}$.    
We define the \emph{size} of a protocol tree as the number of leaves, plus the number of internal wires that correspond to teleportations. To count this, it is helpful to define $n_k\equiv |\{w_k^j\}|$. Then we define the size of a protocol tree as
\begin{align}\label{eq:sizedef}
    H_{(x,y)} \equiv \left(1 + \sum_{k:n_k > 1} (n_k - 1) + \sum_{k:n_k = 1} 1 \right) 
\end{align}
The protocol size counts the \emph{number of shares} which are either unit-routed or teleported. 
This lower bounds another quantity of interest, which is the total log dimension of all the shares either unit-routed or teleported during the protocol, which we call the \emph{weighted protocol tree size} and denote $\tilde{H}_{(x,y)}$. 
To count this, it is helpful to define $\tilde{n}_k=\sum_i \log \dim w_k^i$. Then we have
\begin{align}\label{eq:weightedsizedef}
    \tilde{H}_{(x,y)} \equiv \left(\log \dim Q + \sum_{k:n_k > 1} (\tilde{n}_k - \log \dim v_k) + \sum_{k:n_k = 1} \tilde{n}_k \right) .
\end{align}
If a share is unit-routed on a bit that is on the same side as the share, there is zero entanglement cost, while if the share is on the opposite side, there is an entanglement cost given by the log dimension of the share. 
Each share which is teleported gives an entanglement cost equal to the log dimension of that share. 
Our assumption in the converse results below will be that a polynomial in the entanglement cost upper bounds the weighted protocol tree size $\tilde{H}_{(x,y)} \leq \text{poly}(E)$. This is our precise statement of not too many unit-routings being performed on locally held bits. 

We begin with the following theorem, which shows code-routing using Smith codes is in $\P_{(2)}$, under our assumption relating protocol tree size and entanglement cost. We can also strengthen this to $\modkl_{(2)}$ if the protocol tree is $O(1)$ depth, or $L_{(2)}$ if each encoding has $O(1)$ size. 
Theorems \ref{thm:Pconverse} and \ref{thm:complexityupperbound-wide} also have alternative proofs in terms of composed span programs, which we haven't included here. 

\begin{theorem}\label{thm:Pconverse}
Consider a code-routing protocol which uses only Smith codes, uses $E=\text{poly}(n)$ copies of the maximally entangled state of two qupits, and has protocol trees with size related polynomially to their entanglement cost. Then we can determine the outcome of the protocol in $\P_{(2)}$, polynomial time with local pre-processing.
\end{theorem}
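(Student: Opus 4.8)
The plan is to show that, given the classical inputs $x$ and $y$ (after local pre-processing), a deterministic polynomial-time machine can trace out which shares end up on Alice$_1$'s side and then decide whether that collection of shares is authorized by the relevant Smith code, which is precisely the outcome of the protocol. The overall strategy is therefore: (i) reconstruct the protocol tree from $x,y$; (ii) for the one top-level Smith-code encoding (and each nested Smith-code encoding), determine which output shares are sent left versus right; (iii) reduce the ``is this set of shares authorized?'' question for a Smith code to a linear-algebra computation over $\mathbb{Z}_p$; and (iv) compose these answers up the protocol tree to decide where $Q$ lands. The allowed pre-processing absorbs steps (i) and the evaluation of the functions $C_0[x]$, $C_1[y]$, which by hypothesis have polynomial description.

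The key technical ingredient is step (iii). A Smith code with indicator function $h_I$ is built from a monotone span program for $h_I$, and a set of shares is authorized exactly when the corresponding rows of the span program span the target vector over $\mathbb{Z}_p$ --- equivalently, when the target vector lies in the row space of the submatrix indexed by those shares. Deciding membership in the span of a set of vectors over $\mathbb{Z}_p$ is a rank computation (check whether adding the target vector changes the rank), and Gaussian elimination over $\mathbb{Z}_p$ runs in polynomial time in the number of rows and columns. By the assumption relating weighted protocol tree size to entanglement cost, $\tilde H_{(x,y)}\leq\mathrm{poly}(E)=\mathrm{poly}(n)$, so all the matrices involved have polynomial size and the linear algebra is genuinely polynomial-time. (One should note here that deciding authorization only requires the rank test, not the stronger $\modkl_p$-style counting-mod-$p$ test --- this is exactly why we land in $\P$ rather than something smaller, and it is also the point where restricting to $O(1)$-depth trees or $O(1)$-size encodings buys the sharper $\modkl_{(2)}$ or $L_{(2)}$ statements claimed afterward.)

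For step (iv), I would argue by induction on the protocol tree, processing it from the leaves up: at a unit-routing leaf, the destination of the share is a fixed bit determined by $x$ or $y$; at a teleportation node, the share's effective location follows wherever the teleported-to system goes (with the Pauli correction deferred to the end, as in \cref{def:crprotocol}, so it does not affect routing); at a Smith-code encoding node, the encoded system $v_k$ is recoverable on whichever side holds an authorized set of the output shares $\{w_k^j\}$, and which side that is has already been computed recursively. Walking this up the tree, we determine the side on which $Q$ (the root) is recoverable; the decoupling argument from the proof of \cref{thm:codeprotocol} guarantees that if a given side holds an unauthorized set then $Q$ is not there but is instead on the complementary side together with the retained environment. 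So the protocol outputs ``$Q$ at $r_b$'' for the computed bit $b$, and this is what we had to determine. Everything after the pre-processing is a polynomial number of polynomial-size rank computations over $\mathbb{Z}_p$, so the outcome is computed in $\P_{(2)}$.

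The main obstacle I anticipate is handling adaptive/nested encodings cleanly: when a share teleported to Alice$_1$ is itself re-encoded into another Smith code whose output shares are routed back based on further input bits, the ``which side is authorized'' bookkeeping becomes a recursion whose correctness needs the decoupling/complementarity property to hold at every level, and one must be careful that the tree structure (rather than a DAG) is what makes the bottom-up evaluation well-defined and that the size bound \eqref{eq:weightedsizedef} controls the total work across all levels. A secondary subtlety is verifying that the row-space membership characterization of authorization is the right one for Smith's construction specifically (as opposed to a general secret-sharing scheme), and that the span program underlying the code can itself be extracted in polynomial time from the protocol description --- this is where the ``uses only Smith codes'' hypothesis is essential.
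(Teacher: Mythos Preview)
Your proposal is correct and follows essentially the same approach as the paper: a recursive evaluation of the protocol tree in which each encoding node reduces to a span-membership test over $\mathbb{Z}_p$ for the underlying monotone span program of the Smith code, with correctness established by induction on layers (using decoupling for the ``unauthorized on one side $\Rightarrow$ recoverable on the other'' step) and total work bounded via the weighted protocol tree size $\tilde H_{(x,y)}$. The paper writes this as a top-down recursive function \texttt{GetOwner} rather than your bottom-up traversal, and gives a slightly more explicit runtime bound $\sum_k \tilde n_k^2 \le N_{(x,y)}^2 \le (2\tilde H_{(x,y)})^2$, but the content is the same.
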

\begin{proof}
\,\,We will give an explicit $poly(E)$ time algorithm.
Recall that the protocol tape consists of a list
\begin{align}
    I=(a(x),S_1,...,S_\ell,b(y),S_{\ell+1},...,S_{\ell+\ell'})
\end{align}
and each $S_i=(v_i,\{w_i^j\},T_i)$ describes a unit-routing, teleportation, or encoding.
By assumption, the encoding here corresponds to a Smith code.
It will be convenient in this proof to take $T_i$ to be a description of the span program defining that Smith code.
To denote this, when the third entry describes an encoding, we will use the labelling $SP_i$ rather than $T_i$, i.e.\ $S_i=(v_i,\{w_i^j\},SP_i)$.
Recall also that the size of the span program is equal to the number of rows in its matrix. 

Given this representation of the protocol, we define the following recursive function which takes a tuple $S_k$ as input and returns $0$ if Alice$_0$ is able to reconstruct the input share $v_k$, or returns $1$ if Alice$_1$ is able to reconstruct the share $v_k$. 
In the pseudo-code below, we denote a span program by $SP_k$, where each span program is defined by a tuple $SP_k=(M_k,\phi_k,t_k)$, where function $\phi_k$ maps from a row index $i$ to a pair $(j,\epsilon_i)$, as explained in \cref{appendix:spanprograms}.
We use the notation $\phi_k(i)[1]=j$.
Note that Smith codes are defined by monotone span programs, meaning that $\epsilon_i=1$ always.\\

\noindent Define $\text{GetOwner}(S_k,I)$:\\
\indent If $n_k=0$,\\
\indent \indent Return $T_k$  \\
\indent If $n_k=1$, \\
\indent \indent Search for $S_i\in I$ with $w_i^0$ as its input, call it $S_j$ \\
\indent \indent Return $\text{GetOwner}(S_j,I)$ \\
\indent $M^1_{k}=\{\}$\\
\indent For $i$ from $1$ to $\size(SP_p)$,\\
\indent \indent Set $v = \phi_k(r_i)[1]$ \\
\indent \indent Search for $S_i\in I$ with $w_i^v$ as its input, call it $S_v$\\
\indent \indent If $\text{GetOwner}(S_v,I) = 1$, \\
\indent \indent \indent Append $r_i$ to $M^1_k$ \\
\indent If $t_k \in \spn{M}^1_k$, \\
\indent \indent Return 1 \\
\indent Else, \\
\indent \indent Return 0. \\

Then, our program is as follows:\\

\noindent Find the tuple $S_i$ with $Q$ as its input, call it $S_k$\\
\noindent Return $\text{GetOwner}(S_k,I)$\\

It is straightforward to see that this algorithm is correct using an inductive proof, where we induct on layers in the protocol tree.
Here, we say that the layer of a node is the maximal length of a path from that node to a leaf.
The $0$th layer -- the leaves of the tree -- all correspond to unit-routings, where the algorithm is manifestly correct: unit-routings have $n_k=0$, and $T_k$ is a bit labelling the side that the input share is brought to in the protocol.
The algorithm just returns this bit directly, which is correct.
Now assume by way of induction that the algorithm behaves correctly on tuples $S_{k'}$ at layer $m$ of the protocol tree, and consider its behaviour on a tuple $S_k$ at the $m+1$th layer.
We have that $n_k \neq 0$, so we need only consider the cases where $n_k=1$ or $n_k>1$. 

For $n_k=1$ the protocol has teleported $v_k$ into system $w_i^0$, which is in the $m$th layer, so the algorithm returns the side where $w_i^0$ is brought, which is correct. 

For $n_k>1$, the share $v_k$ has been recorded into a secret sharing scheme. The scheme is defined by a span program, and records $v_k$ into a set of shares $\{w_k^i\}$. The scheme's indicator function is computed by a monotone span program $(M_k,\phi_k,t_k)$. The share $v_k$ will be recoverable on the side labelled by the output of the span program. The inputs to the span program $z_i$ are determined by where the protocol brings the shares $w_k^i$, with $z_i=0$ meaning share $w_k^i$ is on the left and $z_i=1$ meaning share $w_k^i$ is on the right. Share $v_k$ is then available on the side labelled by the indicator function evaluated on the string $z$. The algorithm works by evaluating the span program, and calling the $\text{GetOwner}(\cdot,I)$ function recursively to determine on which side the shares $w_k^i$ are recoverable. In particular the matrix $M_k^1$ includes $t_k$ in its span exactly when the span program evaluates to $1$, so the algorithm correctly returns $1$ when $v_k$ is on the right. When the set of shares on the right does not reveal $v_k$ it must, because we used a secret sharing scheme, reveal nothing about $v_k$. Because we always maintain the purifying system on the left, $v_k$ is then available on the left. Accordingly, the algorithm correctly returns $0$ in this case. 

Next we analyze how the run time relates to the entanglement cost.
Begin by considering the run time for each call to $\text{GetOwner}(S_k,I)$.
The run time is dominated by the step where we determine whether an $e$-dimensional vector $t_k$ lies in the span of another set of $|M_{k}^1|$ vectors.
This can be done in $O(e|M_{k}^1|)$ steps.
The length of the rows is always less than or equal to the total number of them, since the columns are linearly independent\footnote{This follows because any column expressible as a linear combination of other columns amounts to a redundant condition on the requirement for a set of rows to have the target vector in its span; thus it can be safely deleted from the span program matrix without changing the function that the span program computes.}, so $e\leq \size(SP_k)$.
The number of rows in $M_k^1$ is less than or equal to the total number of rows in the span program, so $|M_k^1|\leq \size(SP_k)$.
Together these give $O(e|M_{k}|)< O(\size(SP_k)^2)$. 
In a Smith code, the total share size is given by the size of the span program, so $\tilde{n}_k = \size(SP_k)$.
Finally, note that on a given input pair $(x,y)$ only certain span programs from the full collection $\{S_k\}$ are reached in the algorithm. 
Call this collection $\mathcal{S}_{(x,y)}$. Thus we can bound the total run time for a given $x$ and $y$ by
\begin{align}\label{eq:timebound}
    T_{(x,y)} \leq \sum_{k\in \mathcal{S}_{(x,y)}} \tilde{n}_k^2 \leq N_{(x,y)}^2
\end{align}
where $N_{(x,y)}=\sum_{k} \tilde{n}_k$ is the total size of all shares used across all encodings involved in the protocol, on inputs $(x,y)$.
We would like to relate this run time to the protocol tree size, as defined in \cref{eq:weightedsizedef}. For fixed $N_{(x,y)}$, the weighted protocol tree size is minimized for the case where $n_k=2$ for all encodings (this maximizes the subtractions appearing in \cref{eq:weightedsizedef}), so that
\begin{align}
    \tilde{H}_{(x,y)}\geq N_{(x,y)}/2
\end{align}
where we've also used that $\frac{\tilde{n}_k}{2} \geq \log \dim v_k$, i.e. that each share in the code is at least as large as the input system. 
Since by assumption the entanglement cost is polynomially related to the weighted size, combining this with \cref{eq:timebound} we have a polynomial upper bound on the run time in terms of entanglement cost. Note that this polynomial time computation is performed by taking the protocol tape as input, which itself is computed via local pre-processing, so the entire protocol is in $P_{(2)}$.
\end{proof}

For certain classes of code-routing protocols, we can determine their output in smaller classes than $\P_{(2)}$. This is possible in two cases: protocols which never concatenate codes to depth more than $O(1)$, and protocols which are built by concatenating codes of $O(1)$ size. We can understand the first of these as a small relaxation of the single-encoding protocol given in \cref{thm:codeprotocol}, and the second as a small relaxation of the garden-hose protocol. In both cases deforming these protocols slightly doesn't add computational power. We discuss these two cases in the following subsections. 

\subsubsection*{Protocols using $O(1)$ depth encodings}

We first discuss the following theorem, which modifies the protocol used in \cref{thm:codeprotocol} to allow $O(1)$ depth of encodings and shows the resulting protocols still compute functions inside the class $\modkl$.

\begin{theorem}\label{thm:complexityupperbound-wide}
Consider a code-routing protocol which uses only Smith codes, takes $n$ bits as input, uses $E=\text{poly}(n)$ copies of the maximally entangled state of two qupits, has protocol trees with size related polynomially to their entanglement cost and which have $O(1)$ depth. Then the outcome of the protocol can be computed in $\modkl_{(2)}$. 
\end{theorem}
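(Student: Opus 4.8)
The plan is to show that a code-routing protocol satisfying the hypotheses computes a function of the form $f(x,y)=h(\alpha(x),\beta(y))$ with $h\in\PSP_p$, and hence lies in $\PSP_{p,(2)}=\modkl_{(2)}$. The starting point is the $\text{GetOwner}$ algorithm from the proof of \cref{thm:Pconverse}. That analysis already establishes that the output of a code-routing protocol using Smith codes equals the value obtained by recursively evaluating the protocol tree: at a teleportation node the value is inherited from the unique child; at a leaf (a unit-routing) the value is a single input bit or its negation; and at an encoding node the value is the monotone span program over $\mathbb{Z}_p$ defining that Smith code, evaluated on the bits supplied by its children. So the protocol's output is a composition of monotone span programs, of depth equal to the protocol tree depth, with literals at the leaves. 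Since the share size of a Smith code equals the size of its defining span program, the assumption $\tilde H_{(x,y)}\le\text{poly}(E)$ together with $E=\text{poly}(n)$ forces the total size of all span programs occurring in the tree to be $\text{poly}(n)$; in particular each individual span program has size $\text{poly}(n)$, and the tree depth is $O(1)$ by hypothesis.

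Next I would collapse this $O(1)$-deep tree into a single polynomial-size span program over $\mathbb{Z}_p$. The structural tool is a span program composition lemma: the composition $P\circ(P_1,\dots,P_k)$ of span programs over $\mathbb{Z}_p$, where $P$ has size $s$ and the $P_j$ have sizes $s_j$, is computed by a span program over $\mathbb{Z}_p$ of size at most $s\cdot\max_j s_j$ — negated literals appearing in $P$ are handled by substituting the dual span program, which has the same size. Applying this level by level up the protocol tree, a tree of depth $d$ in which every node carries a span program of size at most $S$ yields a composed span program of size at most $S^{d}$. With $S=\text{poly}(n)$ and $d=O(1)$ this is still $\text{poly}(n)$, so the non-local part of the protocol's output function is in $\PSP_p=\modkl$.

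Two bookkeeping points remain. First, the protocol tree structure, the choice of Smith codes, and the leaf bits may all depend on $(x,y)$ through $C_0[x]$ and $C_1[y]$. I would absorb this into the local pre-processing by taking $\alpha(x)=C_0[x]$ and $\beta(y)=C_1[y]$, so that $h$ is the fixed function which reads a combined protocol tape describing a polynomial-size, $O(1)$-depth tree of $\mathbb{Z}_p$-span programs and outputs $\text{GetOwner}$. Because the span programs are now part of the input, what is really needed is the uniform form of the composition step: universal evaluation of a span program over $\mathbb{Z}_p$ given as input — form the submatrix of satisfied rows, then test whether the target lies in its row span — is a question of $\mathbb{Z}_p$ linear algebra, hence in $\modkl$ by \cite{buntrock1992structure}, and an $O(1)$-deep nesting of such evaluations remains in $\modkl$. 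Second, if one wants the tree to have a fixed shape across inputs, one can pad the unused parts of each player's tree with trivial span programs that force the attached share to a fixed side; one should check this padding does not alter the computed function. Together these give $h\in\PSP_p$ and hence the protocol is in $\modkl_{(2)}$.

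The main obstacle is the composition step and its uniform version: establishing that an $O(1)$-fold composition of polynomial-size span programs over $\mathbb{Z}_p$ stays polynomial (equivalently, that $\modkl$ is closed under $O(1)$ levels of composition), with only polynomial size blow-up per level. This per-level bound is exactly what confines us to $\modkl$ rather than merely $\P$, and it is precisely this bound that degenerates for deep trees — composing $\text{poly}(n)$ levels would give size $S^{\text{poly}(n)}$, exponential — which is why \cref{thm:Pconverse} only yields $\P_{(2)}$ in the general case. Getting the dual-span-program substitution and the accounting of negated literals right, and confirming the uniform evaluation-of-a-given-span-program problem reduces cleanly to the $\mathbb{Z}_p$ linear-algebra problems of \cite{buntrock1992structure}, are the places I would expect to spend the most care.
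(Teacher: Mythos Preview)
Your approach is a valid alternative to the paper's and, in fact, the paper explicitly notes just before stating \cref{thm:complexityupperbound-wide} that ``Theorems \ref{thm:Pconverse} and \ref{thm:complexityupperbound-wide} also have alternative proofs in terms of composed span programs, which we haven't included here.'' The paper instead works directly in the non-deterministic Turing machine model of $\modkl$: it proves a normalization lemma (\cref{lemma:paths}) showing, via Fermat's little theorem, that any $\modkl$ function can be computed by a machine with exactly $f(x)\bmod p$ accepting and $1-f(x)\bmod p$ rejecting paths; this makes composition of machines behave cleanly, and an inductive argument over layers of the protocol tree shows correctness, while the memory bound $\sum_{i\in\text{path}}\log\tilde n_i = O(\log \tilde H_{(x,y)})$ follows from the $O(1)$ depth. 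Your route via the span-program composition bound $\size(P\circ(P_1,\dots,P_k))\le\size(P)\cdot\max_j\size(P_j)$, giving total size $S^{O(1)}=\text{poly}(n)$, is the structural counterpart and is arguably cleaner in the non-uniform picture. Where the two approaches converge is precisely the step you flag as the main obstacle: once the span programs are part of the input, you need that an $O(1)$-fold nesting of ``evaluate a given $\mathbb{Z}_p$ span program'' remains in $\modkl$. You assert this but do not prove it; the paper's \cref{lemma:paths} is exactly the tool that establishes this closure from scratch. So your proposal is correct in outline, but to make it self-contained you would either need to invoke a known closure result for $\modkl$ under constant-depth oracle composition, or reproduce the Fermat-based path-normalization argument---at which point you have essentially reconstructed the paper's proof inside yours. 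The trade-off: the paper's machine-based argument is self-contained and makes the memory accounting transparent; your span-program argument makes the polynomial-size bound and the role of $O(1)$ depth more visibly structural, at the cost of deferring the uniform closure step.
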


Our proof will use the following characterization of $\modkl$ in terms of non-deterministic Turing machines.
For any non-deterministic Turing machine $T$ we define the function $\mathcal{F}(T)$ as follows. 
For a given input $x$, call the number of accepting paths $F(x)$. 
We then define $\mathcal{F}(T)(x)=1$ when $F(x)$ is non-zero mod $p$, and return $\mathcal{F}(T)(x)=0$ otherwise. 
Then the class $\modkl$ is the set of functions of the form $f=\mathcal{F}(T)$ where $T$ has $O(\log(n))$ memory for $n$ the length of $x$. 
Note that because Smith codes of polynomial size are evaluated by polynomial sized span programs, and hence in $PSP_p$, and recalling that $\modkl=PSP_p$ \cite{karchmer1993span}, we have that they can also be evaluated by non-deterministic Turing machines with $O(\log (n))$ memory that count paths mod $p$.

To prove \cref{thm:complexityupperbound-wide}, we first need the following lemma, which will allow us to compose $\modkl$ machines in a simple way. 

\begin{lemma}\label{lemma:paths}
Suppose we have a function $f = \mathcal{F}(T)$ for a non-deterministic Turing machine $T$ running on memory $m=  \Omega(\log n)$ where $n$ is the length of $x$. 
Then there is another non-deterministic Turing machine $T'$ that uses memory $O(m)$, has $f(x)$ mod $p$ accepting paths (and therefore still satisfies $f = \mathcal{F}(T')$), and has $1-f(x)$ (mod $p$) rejecting paths.
\end{lemma}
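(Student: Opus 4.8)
The plan is to prove the lemma in two stages: first, turn $T$ into a machine $U$ whose \emph{total} number of computation paths is $\equiv 1 \pmod p$ while the number of accepting paths is unchanged; second, replace $\#\mathrm{acc}(U,x)$ by $f(x)$ using Fermat's little theorem, without disturbing the clean total. Throughout I may assume, at the cost of only an $O(m)$-bit clock and of splitting wide branches into binary ones (both standard, and changing neither the accepting nor the total path count), that $T$ halts on every computation path and that every nondeterministic step has exactly two successors; then the computation tree of $T$ on any $x$ is finite, with $N=N(x)$ leaves (i.e.\ computation paths) and exactly $N-1$ branching nodes.

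For the first stage, let $U$ simulate $T$ step by step, except that immediately before each \emph{branching} step of $T$, the machine $U$ inserts a ``bail'' gadget: a nondeterministic choice among $p$ options, realized as a small binary subtree with exactly $p$ leaves, one of which (``proceed'') continues the simulation and the other $p-1$ of which halt and reject. Since exactly one path of $U$ reaches each branching node of $T$, each such node contributes $p-1$ extra rejecting paths, for a total of $(p-1)(N-1)$; the paths that choose ``proceed'' everywhere are in verdict-preserving bijection with the paths of $T$. Hence $\#\mathrm{acc}(U,x)=\#\mathrm{acc}(T,x)$ and $\#\mathrm{total}(U,x)=N+(p-1)(N-1)=pN-(p-1)\equiv 1 \pmod p$, and $U$ runs in space $O(m)$ since the gadget needs only $O(\log p)=O(1)$ extra bits.

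For the second stage, let $T'$ run $p-1$ independent copies of $U$ in sequence (reusing space, and without short-circuiting), keeping only the AND of the copies' accept bits, and accept iff that AND is $1$. Then $\#\mathrm{acc}(T',x)=(\#\mathrm{acc}(U,x))^{p-1}=(\#\mathrm{acc}(T,x))^{p-1}\equiv f(x)\pmod p$, using Fermat's little theorem together with $f=\mathcal{F}(T)$ and $f(x)\in\{0,1\}$, while $\#\mathrm{total}(T',x)=(\#\mathrm{total}(U,x))^{p-1}\equiv 1^{p-1}=1\pmod p$. Therefore $\#\mathrm{rej}(T',x)=\#\mathrm{total}(T',x)-\#\mathrm{acc}(T',x)\equiv 1-f(x)\pmod p$, and $\mathcal{F}(T')=f$ follows from $f(x)\in\{0,1\}$. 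The space used by $T'$ is one copy of $U$'s configuration plus a counter up to $p-1$ plus one bit, i.e.\ $O(m)$.

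The main obstacle is the first stage: naive fixes such as padding the computation tree or appending a constant number of dummy paths cannot make the total clean, since $N(x)$ can be any residue mod $p$. The point of inserting a $(p-1)$-fold rejecting branch at \emph{every} branching node is that it turns $N$ into $pN-(p-1)$, whose leading term vanishes mod $p$, leaving the constant $-(p-1)\equiv 1$. The two things one must be careful about are that the $p$-way choice is realized with exactly $p$ leaves (so the extra path count is exactly $p-1$ per branching node), and that the gadget is attached only at genuine branching steps, not at deterministic steps, as the latter would introduce an uncontrolled dependence on the number of deterministic steps along each path.
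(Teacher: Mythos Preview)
Your proof is correct and takes a genuinely different route from the paper. The paper first applies the Fermat step and only afterwards fixes the reject count: it runs $p-1$ short-circuiting copies of $M_0$ to get a machine $M_1$ with $(\#\mathrm{acc}(M_0))^{p-1}\equiv f(x)$ accepting paths, and then builds $M_2$ by prepending a single extra rejecting branch and, after $M_1$ halts, nondeterministically fanning each accept into one accept plus $p-1$ rejects and each reject into $p$ rejects, which forces $\bar F_2\equiv 1-f(x)$. You instead normalise the \emph{total} path count first, inserting a $(p-1)$-reject gadget at every binary branching node so that $\#\mathrm{total}(U)\equiv 1$, and then run $p-1$ \emph{non}-short-circuiting copies; because both the accept count and the total count multiply cleanly, the reject count falls out with no post-processing. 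Your approach is arguably more conceptual (reduce to ``total $\equiv 1$'', then products preserve everything), at the price of needing the normalisation to binary branching and the leaf/internal-node count for full binary trees; the paper's approach avoids touching the internal structure of $T$ at all and applies a single end-of-computation gadget, but the $M_2$ bookkeeping is a bit more ad hoc. Either way the space analysis is the same $O(m)$.
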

\begin{proof}\,
We will start with any Turing machine $M_0$ such that $f=\mathcal{F}(M_0)$, and from it construct a new machine $M_2$ whose number of accepting and rejecting paths will satisfy the statement of the lemma.
As an intermediary, we need another Turing machine $M_1$. We will use $F_i(x)$ to denote the number of accepting paths in Turing machine $M_i$ run on input $x$, and $\bar{F}_i(x)$ the number of rejecting paths. \\

The machine $M_1$ uses $p-1$ copies of $M_0$, which we label $M_0^{(i)}$ with $i\in\{1,...,p-1\}$. It is defined as follows. \\

\noindent Define $M_1$: \\
\indent For $i\in\{1,...,p-1\}$ \\
\indent \indent Run $M_0^{(i)}$ \\
\indent \indent If $M_0^{(i)}$ is in reject state, \\
\indent \indent \indent Reject\\
\indent Accept\\

$M_1$ runs $p-1$ copies of $M_0$, and accepts only if all $p-1$ copies enter accept states.
Consequently, the number of accepting paths is
\begin{align}
    F_1(x) &= (F_0(x))^{p-1} \nonumber \\
    &= f(x) \,\, (\text{mod}\,\, p)
\end{align}
where in the second line we've used Fermat's little theorem. Next, we build the machine $M_2$. \\

\noindent Define $M_2$:\\
\indent Goto both the next two lines\\
\indent Reject\\
\indent Run $M_1$ \\
\indent If $M_1$ is in accept state,\\
\indent \indent Non-deterministically pick $j \in\{0,...,p-1 \}$ \\
\indent \indent If $j  > 0$,\\
\indent \indent \indent Reject \\
\indent \indent Accept\\
\indent If $M_1$ is in reject state,\\
\indent \indent Non-deterministically pick $j \in \{0,...,p-1\}$\\
\indent \indent Reject\\

$M_2$ has the same number of accepting paths as $M_1$, which is $f(x)$ mod $p$. For the rejecting paths, we have $p-1$ paths introduced for each accept path of $M_1$, plus $p$ additional paths from each reject state, plus one additional path from the first line. So the number of rejecting paths of $M_2$ is given by
\begin{align}
    \bar{F}_2(x) &= 1 + (p-1)F_1(x) + p\bar{F}_1(x) \nonumber \\
    &= 1 - f(x) \,\, (\text{mod}\, \, p)
\end{align}
as needed.

Notice that $M_2$ involves running $M_0$ an $O(1)$ number of times sequentially, storing $j $, and keeping track of the $i$ counter. All this can be done in $O(m)$ memory. 
\end{proof}

\vspace{0.1cm}
Now we are ready to prove the main theorem of this section. 
\vspace{0.1cm}

\begin{proof} \textbf{\,(Of theorem \ref{thm:complexityupperbound-wide})}\,
We use the description of the protocol in terms of a protocol tape. 

Recall that when $S_i$ has no output shares, the tuple $S_i=(v_i,\emptyset,T_i)$ describes a unit-routing of the share $v_i$ to the side labelled by $z_{T_i}$, which is a bit of $z=(a(x),b(y))$.

When $S_i$ has one output share, $S_i=(v_i,w_i^0,\emptyset)$ describes a teleportation.

Finally when $S_i$ has more than one output share, the tuple describes an encoding. The encoding is into a Smith code, so the indicator function $f_i$ can be computed with a span program of size $\tilde{n_i}$. 
To find a Turing machine such that $f_i=\mathcal{F}(T)$, we need only memory $O(\log \tilde{n}_i)$. 
From lemma \ref{lemma:paths} then, we can construct a non-deterministic Turing machine $T_i$, also with memory $O(\log \tilde{n}_i)$, such that $T_i$ has $f_i(x)$ mod $p$ accepting paths and $1-f_i(x)$ mod $p$ rejecting paths. 

We consider a function $L(s,I)$, which takes a share $v$ and determines if that share is on the left (corresponding to output $0$) or the right (corresponding to output of $1$) at the end of the protocol defined by input tape $I$.
We define $L(s,I)$ recursively, as follows. \\

\noindent Define $L(s,I)$:\\
\indent Search through $I$ and find $S_i$ with $s=v_i$
\\
\indent If $n_i=0$,\\
\indent \indent Return $z_{T_i}$ \\
\indent If $n_i=1$, \\
\indent \indent	Return $L(w_i^0,I)$ \\
\indent Else,\\
\indent \indent Return $f_i(L(w_i^0,I),...,L(w_i^{n_i},I)))$\\

Note that this machine does not compute each of the $L(w_i^j,I)$ and store them --- that would already be $n_i$ bits of memory.
Instead it computes $L(w_i^j,I)$ each time it needs that bit value, and can re-use the same memory bits each time it does this.
The output of the entire protocol is determined by running $L(Q,I)$, where $Q$ is the input system to be routed. 

$L(Q,I)$ determines the output for the protocol, but we need to show this function can be evaluated by a $\modkl$ machine. 
To do so, we modify $L(s,I)$ to a new function $L_T(s,I)$ by making the replacement $f_i\rightarrow T_i$, where $T_i$ is a Turing machine constructed using \cref{lemma:paths}. 
$L_T(Q,I)$ can be run on a non-deterministic machine, and we can consider counting the number of accepting paths. 
Our claim is that 1) this correctly determines the output of the protocol in that $\mathcal{F}(L_T(s,I))=L(s,I)$ and 2) $L_T(Q,I)$ runs in non-deterministic log-space, so that we've computed the output of the protocol in $\modkl$. 

First consider correctness. We work inductively in the layers of the protocol tree, where the layer of a node is defined as before to be the maximal length of a path from the node to a leaf.
We will show for each layer that, for any node in that layer, the number of accepting paths is equal, mod $p$, to the output of the corresponding function and further that the number of rejecting paths is equal, mod $p$, to $1$ minus the value of that function. 

First consider the $0$th layer, i.e.\ the leaves of the tree, which will always consist of unit-routings.
These are deterministic computations, consisting of returning $z_{T_i}$ (which in this case is a single bit). They return $z_{T_i}$ if and only if there is $z_{T_i}$ accepting paths, and have $1-z_{T_i}$ rejecting paths, so this is correct. 

Next consider the $m+1$th layer of the protocol tree, and assume the inductive hypothesis for the $m$th layer.
For an encoding, to evaluate the function $f_i$ on a log-space machine we need non-determinism.
Consider the function $f_i$, its corresponding Turing machine $T_i$, and focus on one input to $f_i$, say $z_*$. 
By construction, for a definite input (or a single path) with $z_*=z$, we know $T_i$ has $f_i(z)$ accepting paths and $1-f_i(z)$ rejecting paths. 
Now suppose we replace the input $z_*$ with calls to a non-deterministic Turing machine $T_*$ at the $m$th layer.
Then including all input paths from $T_*$ as well as all paths for $T_i$ itself, the number of accepting paths for $T_i$ is the number of accepting paths for $T_i$ given $z_*=1$, times the number of accepting paths for $T_*$, plus the number of accepting paths for $T_i$ given $z_*=0$, times the number of rejecting paths for $T_*$.
Using that the number of accepting paths of $T_i$ is $f_i(z_*)$, and rejecting paths is $1-f_i(z_*)$, and a similar statement for $T_*$ and associated function $f_*=\mathcal{F}(T_*)$, we have that the number of accepting paths for $T_i$ is
\begin{align}
F_i &= f_i(1) f_* + f_i(0) (1-f_*) \,\,(\text{mod}\,\, p).
\end{align}
Notice that for $z_*=f_*=1 \,\, \text{mod}\,\, p$, we have $F_i=f_i(1)$, so the number of accepting paths is as if $z_*$ were given deterministically. 
Similarly if $z_*=f_*=0$, $F_i=f_i(0)$, which again is the same as if $z_*$ were given deterministically. 
In particular, the number of accepting paths satisfies the requirements of the inductive hypothesis.
The number of rejecting paths of $T_i$ is 
\begin{align}
    \bar{F}_i = (1-f_i(1))f_* + (1-f_i(0))(1-f_*),\,\,(\text{mod}\,\, p)
\end{align}
using similar reasoning to above.
Thus for $z_*=f_*=1$, we have that the number of rejecting paths is $1-f_i(1)$, and for $z_*=f_*=0$, we have that the number of rejecting paths is $1-f_i(0)$, so that the number of rejecting paths also satisfies the inductive hypothesis. 
This argument also gives correctness in the case of a teleportation, since teleportation is a special case of the above where $T_i$ is deterministic. 

Finally we need to determine the memory usage of this algorithm.
The needed memory is to evaluate the Turing machines at each layer, which each use $\log \tilde{n}_{i}$ memory, where $\tilde{n}_{i}$ is the log dimension of the output shares of tuple $S_i$.
Calling Turing machines recursively, we can re-use memory for machines at the same layer of recursion, but must add the memory requirements for machines at different layers. Adding $\log |\{S_i\}|$ bits of memory for the search through the list of the $S_i$, calling $L_T(Q,I)$ uses
\begin{align}\label{eq:memory}
	M_{(x,y)} = \max_{\text{paths} \, p} \sum_{i \in p} \log \tilde{n}_{i} + \log |\{S_i\}|
\end{align}
bits of memory. The second term is bounded by $\log \tilde{H}_{(x,y)}$ for $\tilde{H}_{(x,y)}$ the weighted size of the protocol tree, since each $S_i$ adds at least $1$ to the size of the protocol tree.
Finally, note that the length of the path is bounded by the depth of the protocol tree.
Then using our assumption that we have at most $O(1)$ depth, and because $\tilde{n}_i\leq \tilde{H}_{(x,y)}$, we have
\begin{align}
    M_{(x,y)} &\leq O(\log \tilde{H}_{(x,y)}).
\end{align}
Because $\tilde{H}_{(x,y)}$ is related polynomially to the entanglement cost, we've proven the theorem. 
\end{proof}

\subsubsection*{Protocols using codes of $O(1)$ size}

In this section we consider protocols that use only codes with $O(1)$ shares. Recall that the garden-hose protocol corresponds to the case where encodings are size $1$, and the efficiently computable functions in that case is the class $\L_{(2)}$. The following theorem shows that with small codes the complexity is not increased. Note that this is our only converse theorem where we do not restrict to Smith codes. 

\begin{theorem}\label{thm:complexityupperbound-deep}
Consider a code-routing protocol that takes $n$ bits as input, uses $E=\text{poly}(n)$ copies of the maximally entangled state of two qupits as a resource, has protocol trees with size related polynomially to the entanglement cost, and uses codes with at most $O(1)$ shares. Then the outcome of the protocol can be computed in $\L_{(2)}$. 
\end{theorem}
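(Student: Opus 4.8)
The plan is to recognise the output of such a protocol as the value of a bounded-fan-in Boolean formula, and then use that formula evaluation is in $\L$. I would start from the recursive function $L(s,I)$ from the proof of \cref{thm:complexityupperbound-wide}: given a share $s$ and the protocol tape $I$, it returns the side ($0$ for left, $1$ for right) on which $s$ is recoverable at the end of the protocol, and the outcome of the whole routing task is $L(Q,I)$. On a unit-routing ($n_i=0$) it returns a literal $z_{T_i}$ of $z=(a(x),b(y))$; on a teleportation ($n_i=1$) it returns $L(w_i^0,I)$; and on an encoding ($n_i>1$) it returns $f_i(L(w_i^0,I),\dots,L(w_i^{n_i},I))$, where $f_i$ records which side $v_i$ is recoverable on as a function of which of the $n_i$ shares went right. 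Only this abstract input--output behaviour of the code is used, so no restriction to Smith codes is needed --- which is why this converse theorem, alone among the three, holds for arbitrary codes.

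Under the hypothesis that every code has $O(1)$ shares, every node of the protocol tree has fan-in $O(1)$, so $L(Q,I)$ is exactly the value of a Boolean formula $\Phi_I$: its leaves are literals over $(a(x),b(y))$, its fan-in-$1$ gates are identities, and its other gates compute the $O(1)$-ary functions $f_i$ --- each of which can, if a standard gate basis is preferred, be rewritten as a constant-size $\mathrm{AND}/\mathrm{OR}/\mathrm{NOT}$ sub-formula. A short counting argument on the protocol tree bounds the number of tuples $S_i$, hence the size of $\Phi_I$, by $O(H_{(x,y)})\le O(\tilde H_{(x,y)})$, which by the assumed polynomial relation between protocol-tree size and entanglement cost is $\mathrm{poly}(E)=\mathrm{poly}(n)$.

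I would then observe that $\Phi_I$ is log-space constructible from $I$: locating the children of a node $S_i$ only requires scanning $I$ for the tuples whose input share equals one of the $w_i^j$, which a log-space machine does with a few pointers. Since $I$ splits as $I=(a(x),S_1,\dots,S_\ell;\ b(y),S_{\ell+1},\dots,S_{\ell+\ell'})$ with the first block a function of $x$ alone and the second of $y$ alone, we set $\alpha(x)=C_0[x]$, $\beta(y)=C_1[y]$ --- exactly the local pre-processing that $\L_{(2)}$ allows. It remains to evaluate $\Phi_I$ in logarithmic space given $\alpha(x),\beta(y)$; this is the Boolean formula value problem, which is well known to lie in $\L$ (formula evaluation is in $\mathrm{NC}^1\subseteq\L$). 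As $\L$ is closed under composition with the log-space map $(\alpha,\beta)\mapsto\Phi_I$, the whole computation is in $\L$, so the protocol outcome is in $\L_{(2)}$.

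The step I expect to be the crux is this last one: the naive recursive evaluation of $L(Q,I)$ uses a call stack as deep as the protocol tree, which here can be $\mathrm{poly}(n)$ deep (unlike the $O(1)$ depth assumed in \cref{thm:complexityupperbound-wide}), so it does not by itself run in logarithmic space. The $O(1)$-share hypothesis is exactly what is needed to make $\Phi_I$ a genuine bounded-fan-in formula rather than a general circuit, so that the non-trivial fact ``formula value is in $\L$'' can be applied; dropping that hypothesis would break this route. A minor point to check is that the counting bound really makes $\Phi_I$ of size $\mathrm{poly}(n)$, so that ``$O(\log)$ space'' is genuinely $O(\log n)$ --- this uses $E=\mathrm{poly}(n)$.
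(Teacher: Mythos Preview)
Your proposal is correct and takes a genuinely different route from the paper. The paper does \emph{not} appeal to the formula value problem; instead it builds an explicit log-space evaluator. Their algorithm traverses the protocol tree depth-first while maintaining a small set $R$ of ``override'' leaves that record the already-evaluated siblings of the current node. The subtle point is that the path from the root to the current node is \emph{not} stored (that would cost linear space); instead the parent is recovered by scanning $I$. By always descending into the child of maximal layer, they guarantee that once a node's value is recorded in $R$, its parent is fully evaluated before any other part of the tree is touched, so $R$ only ever holds children of a single node and hence $O(1)$ entries. This is in effect a hand-rolled log-space formula evaluator tailored to the protocol-tape encoding.

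Your reduction to the Boolean formula value problem is cleaner: you correctly observe that the $O(1)$-share hypothesis makes the protocol tree a bounded-fan-in formula of size $O(H_{(x,y)})=\mathrm{poly}(n)$, that each gate $f_i$ can be expanded into a constant-size $\mathrm{AND}/\mathrm{OR}/\mathrm{NOT}$ block, and that the resulting formula is log-space constructible from the tape. Then Buss's theorem (formula value $\in \mathrm{NC}^1\subseteq\L$) together with closure of $\L$ under log-space reductions finishes the argument. This buys brevity and conceptual clarity at the cost of invoking a non-trivial external result; the paper's approach is more self-contained and exhibits the algorithm explicitly. Your identification of the crux --- that the $O(1)$-share hypothesis is precisely what turns the tree into a formula rather than a general circuit, so that the deep recursion can be collapsed --- is exactly right, and matches what the paper's explicit algorithm is implicitly exploiting.
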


\begin{proof}
\,\,The strategy is to use a depth-first evaluation of the protocol tree, which recall is defined by the protocol tape $I$.
One apparent obstruction is that for deep trees, keeping track of a path from root to leaf can require linear memory.
To avoid this, we travel through the tree while only keeping the current, and sometimes proceeding or subsequent, vertices in memory.

Heuristically, our algorithm works by ``pruning'' the protocol tree, evaluating sub-trees and storing the ownership of shares corresponding to edges of the tree.
To store the full protocol tree would require too large of a memory, so instead we describe the pruned tree using the protocol tape $I$ along with a set $R$, which contains edges that ``over-ride'' the description of the tree given by $I$.
At any given point in the running of the algorithm, $R$ will only describe the ownership of vertices that neighbour the current vertex $v$ being evaluated.
Because the tree has vertices only with $O(1)$ degree, it is possible to store $R$ in logarithmic memory.
By repeatedly pruning the initial tree, eventually we are left with a trivial tree that points to the location of the input share. 

We give the pseudo-code for our algorithm now, then make a few comments on this code below.\\

\noindent $R=\{\}$ \\

\noindent Define IsLeaf$[v,I,R]$ \\
\indent If $R$ contains an $S_i$ with $v$ as input, \\
\indent\indent	Return 1 \\
\indent If $I$ contains an $S_i$ with $v$ as input, \\
\indent\indent If $|\{w_i^j\}| = 0$\\
\indent\indent \indent Return 1 \\
\indent Return 0 \\
	
\noindent Define $F[v_i,I,R]$\\
\indent If IsLeaf$[w_i^j,I,R]=1$ for all $j$,\\ 
\indent\indent Remove any $S_j'$ with inputs $w_i^j$ from $R$\\
\indent\indent $S_i' = (v_i,\emptyset,f_i)$\\
\indent\indent Append $S_i'$ to $R$\\
\indent\indent If there is a $v_k$ which has $v_i$ as a descendant, \\
\indent\indent \indent Erase $v_i$\\
\indent\indent \indent $F[v_k,I,R]$\\
\indent\indent Else, \\
\indent\indent \indent Return $f_i$\\
\indent Else, \\
\indent \indent Find the $w_i^j$ of maximal layer, call it $w_*$\\
\indent \indent	Erase $v_i$\\
\indent \indent	$F[w_*,I,R]$\\
	
\noindent Call $F[Q,I,R]$\\
	
In the definition of $F[v_i,I,R]$, the line which assigns $S_i'$ the value $(v_i,\emptyset,f_i)$ needs some explanation. According to our conventions, when the output systems are empty, the third entry in an $S_i$ tuple is just a bit. Here, we use the value of $f_i$. The inputs to $f_i$ are determined by the locations of the $w_i^j$ shares, but by construction we are in a case where these are easy to look up, since the $w_i^j$s are all leaves. Further, because the code sizes are all $O(1)$ here, this can be done in $O(1)$ memory. 

One other line that requires explanation is the one that finds a $w_*$ of maximal layer.
First, note that the layer of a node can be evaluated in log-space, because it amounts to determining the depth of the sub-tree defined by that node and all its descendants.
Second, the layer of each of the children of the current node can all be stored simultaneously, because (i) there are only $O(1)$ children, and (ii) the layer is bounded by the depth of the protocol tree, which is at most polynomial in $n$ by the assumptions of the theorem, and thus can be stored in $\log(n)$ bits.
	
To understand the correctness of the algorithm, we will make use of a notion of an \emph{effective protocol tree}. This is the tree as described by $R$ taken together with $I$, where $R$ is always `given priority'. In particular, if $v_i$ is an input to $S_i'\in R$ and $S_i \in I$, we use $S_i'$ when travelling to subsequent nodes in the tree. We define the \emph{effective size} to be the \emph{number of vertices} in the effective protocol tree.\footnote{Note that, unfortunately, this is not the same as the size of the effective protocol tree, using our earlier definition of size.}

We claim that the effective tree constructed during the running of the above algorithm evaluates to the same value as the original tree at every step. Further, effective size decreases every time the first If statement is called, and eventually reaches 1.

To see the first claim, consider that at the start of the algorithm $R=\{\}$, so the effective and original protocol trees agree, and so in particular give the same output.
Next, suppose that the effective and original protocol trees give the same output, and then consider how $R$ is edited during one evaluation of the code inside the first If statement of $F$.
This involves replacing $S_i$ with $S_i'$ which is a unit-routing that has the same output as $S_i$.
Manifestly this doesn't change the output.
Further, we remove the descendants of $S_i$, which are never visited in the new effective tree, so this also does not change the output. 

Now consider the second claim, that the effective tree becomes smaller and eventually reaches size one. 
Notice that we must reach the first If statement eventually, specifically after at most a number of calls to $F$ equal to the depth of the effective tree.
In particular each time the second Else statement is called, $F$ is called on a lower vertex in the effective tree. 
Once the call is to a vertex with only leaves as descendants, it goes to the first If statement. 
Next, notice that $S_i$ is replaced with $S_i'$ only when $S_i$ has descendants, and that by construction $S_i'$ is a leaf. 
Thus every such move decreases the effective size. 
Notice further that the algorithm can only end when reaching the single return statement. 
This happens when there is no node preceding the current one in the effective tree, so that the tree has size one. 
The algorithm then returns $f_i$ from the effective tree, which by the correctness property above is the output of the protocol tree.

Consider the memory usage of this algorithm. 
We evaluate indicator functions $f_i$ for $O(1)$ size codes, which can be done with $O(1)$ memory. 
Additionally, we need to keep track of the current node $v_i$, which can be done with $\log |\{S_i\}|$ memory. 
Notice that we have been careful to erase the record of the path followed to reach the current vertex, by erasing the stored $v_i$ value before calling $F$ on a new one, since storing this path would require super-logarithmic memory. 
Finally, we track the entries in $R$, which defines the effective tree.
We claim $R$ only ever contains $S_i$ which are all descendants of a single node, so storing $R$ only requires $O(1)$ memory. 
To see why this is the case, notice that because we travel to the node of maximal layer when traversing the tree, we visit nodes depth-first. 
This guarantees that once a vertex is added to $R$, we completely finish evaluating the ownership of its parent before proceeding to the next vertex, as we are already at the deepest part of the tree.

Considering all contributions listed in the last paragraph, memory cost is $O(\log |\{S_i\}|)$. This is upper bounded by $O(\log H_{(x,y)})$, since each $S_i$ adds at least $1$ to the protocol tree size. 
Then since $H_{(x,y)}\leq \tilde{H}_{(x,y)}$ and $\tilde{H}_{(x,y)}$ is upper bounded by a polynomial in $n$, we are done. 
\end{proof}

\section{Discussion}

The $f$-routing task is of practical relevance in the context of position verification, but also exhibits interesting relationships to complexity theory and secret sharing.
In particular, the garden-hose protocol uses entanglement controlled by the space complexity of $f$, and the code-routing strategy we introduce here has an entanglement cost upper bounded by span program size.
With regards to secret sharing, we showed the size of a secret sharing scheme with indicator function $f$ is lower bounded by the entanglement cost of performing the corresponding $f$-routing task. 

These connections to complexity and secret sharing emphasize the importance, and difficulty, of finding lower bounds on entanglement cost in $f$-routing. In particular, such lower bounds would strengthen the security of position verification schemes based on $f$-routing, and amount to lower bounds on span program size and the size of secret sharing schemes. In general, proving lower bounds on complexity is a challenging goal, and in the case of span programs there has been only limited success \cite{karchmer1993span}.\footnote{More success is possible when restricting to monotone span programs, see e.g.  \cite{robere2016exponential} for recent work, but monotone span programs are not the relevant computing model here.} 
Given this, we might not expect to prove strong lower bounds on entanglement cost. Alternatively, we could hope for conditional lower bounds based on complexity-theoretic assumptions, or for lower bounds stated in terms of some measure of the complexity of $f$. We leave exploring this further to future work.

Finally, note that this work introduces the use of error-correction in non-local quantum computation. By combining error-correction with the teleportation techniques of \cite{buhrman2013garden}, we increase the complexity of functions that can be computed non-locally (at least given our complexity-theoretic assumptions). It would be interesting to understand if error-correcting codes provide enhancements to other non-local computation protocols, for instance the one based on the Clifford+T gate set described in \cite{speelman2015instantaneous}.

\vspace{0.3cm}
\noindent \textbf{Acknowledgements}
\vspace{0.3cm}

We thank Adam Bouland, Kfir Dolev, Anirudh Krishna and Patrick Hayden for helpful discussions. 
AM is supported by the Simons Foundation It from Qubit collaboration, a PDF fellowship provided by Canada's National Science and Engineering Research council, and by Q-FARM. 
SC is supported by a graduate fellowship award from Knight-Hennessy Scholars at Stanford University.

\appendix

\section{Span programs}\label{appendix:spanprograms}

To express an arbitrary function $f$ as described in \cref{lemma:decomposition}, we first give the following definition.

\begin{definition}\label{def:spanprogram}
A \textbf{span program} over a field $\mathbb{Z}_p$ consists of a triple $S=(M, \phi, \mathbf{t})$, where $M$ is a $d\cross e$ matrix with entries in $\mathbb{Z}_p$, $\phi$ is a map from rows of $M$, labelled $r_i$, to pairs $(k,\varepsilon_i)$, with $k\in \{1,...,n\}$ and $\varepsilon_i\in\{0,1\}$, and $\mathbf{t}$ is a non-zero vector of length $e$ with entries in $\mathbb{Z}_p$.
\end{definition}
\begin{definition}
The \textbf{size} of a span program is defined to be $d$, the number of rows in $M$. 
\end{definition}
Given a span program $(M,\phi, \mathbf{t})$, the function it computes is given according to the following rule. Given an input string $z$ of $n$ bits, if the vector $\mathbf{t}$ is in $\text{span}(\{r_i: \exists j, \phi(r_i)=(j,z_j)\})$, then output 1. 
Otherwise, output 0. To unpack this, we understand $\phi(r_i)=(j,\varepsilon_i)$ as saying that row $r_i$ maps to some index, $j$, which labels a bit in the input string $z$. If that bit $z_j$ is equal to $\varepsilon_i$, we include that row. 
Repeating this for all rows, we check if the target vector $\mathbf{t}$ is in the span. 

Every function can be computed by a sufficiently large span program \cite{karchmer1993span}.
As a simple example, the AND function is computed by a span program over $\mathbb{Z}_2$ with matrix $M=((1,0),(0,1))$, map $\phi$ such that $\phi(r_1)=(1,1)$ and $\phi(r_2)=(2,1)$, and target vector $\mathbf{t}=(1,1)$. Another simple example is an OR function, computed by $M=((1),(1))$, the map $\phi(r_1)=(1,1)$ and $\phi(r_2)=(2,1)$, and target vector $\mathbf{t}=(1)$. 

A span program is said to be \emph{monotone} if it has $\varepsilon_i=1$ always. This ensures that changing bit values in $z$ from $0$ to $1$ always adds to the set of rows whose span we are checking, so that monotone span programs always compute monotone functions. Conversely, every monotone function can be computed by a monotone span program \cite{karchmer1993span}, as is easy to verify.  

It will be helpful to introduce some notation dealing with span programs. For a given input $z$, the map $\phi$ picks out some of the rows of $M$, whose span will then be checked to see if it includes the target vector. The subset of rows picked out we will denote by $\phi^{-1}(z)$, and refer to as the \emph{activated} rows. The matrix formed from the activated rows we denote $M_{\phi^{-1}(z)}$. The minimal size of a span program over $\mathbb{Z}_p$ computing a function $f$ is denoted $SP_p(f)$. 

\section{Proof of \cref{lemma:decomposition}}\label{appendix:decompositionlemma}

We are now ready to prove \cref{lemma:decomposition}, which we repeat below for convenience. 

\vspace{0.1cm}
\noindent \textbf{\Cref{lemma:decomposition}}\emph{
\,Given a function $f:\{0,1\}^{m}\rightarrow \{0,1\}$, there exist functions 
\begin{align}
    f' &: \{0,1\}^{m+1} \rightarrow \{0,1\}, \nonumber \\
    f_I &:\{0,1\}^{2m+1}\rightarrow \{0,1\}, \nonumber \\
    g &:\{0,1\}^{m+1} \rightarrow \{0,1\}^{2m+1}, \nonumber 
\end{align}  
such that 
\begin{itemize*}
    \item $f'(z,1) = f(z)$
    \item $f'(z,b) = f_I\circ g(z,b)$
    \item $f_I$ is a valid indicator function
    \item $g$ acts on the first $m$ bits of its input by copying each bit $z_i$ and negating one copy, $z_i\rightarrow (z_i,\neg z_i)$. It leaves the final bit $b$ unchanged.
    \item $mSP_p(f_I)\leq SP_p(f)+1$, where $SP_p(h)$ denotes the minimal size of a span program over $\mathbb{Z}_p$ computing $h$, and $mSP_p(h)$ the size of a monotone span program computing $h$.
\end{itemize*}}

\begin{proof}\,
Given $f$, find the minimal sized span program over $\mathbb{Z}_p$ that computes $f$, and label it $(M_f,\phi_f,\mathbf{t}_f)$.
Label the rows of $M_f$ by $r_i$. 
Then, add one row and one column to $M_f$ to define a new matrix $M_{f'}$ with dimensions $(d+1)\times (e+1)$.
Label the rows of $M_f'$ as $r_i'$. 
Set $(M_{f'})_{d+1,e+1}=1$ and otherwise the added row and column entries are set to be $0$.
Extend $\phi_f$ to a new function $\phi_{f'}$ such that $\phi_{f}(r_i)=\phi_{f'}(r_i')$ for all $i\leq d$, and $\phi_{f'}(r_{d+1}') = (m+1,1)$.
Finally, let $\mathbf{t}_{f'} = (\mathbf{t}_f,1)$.
Then $(M_{f'},\phi_{f'},\mathbf{t}_{f'})$ defines a new function $f'$, given by $f'(z,b)=f(z)\wedge b$, so in particular $f'(z,1)=f(z)$. 

Next, we decompose $f'$ into $f_I$ and $g$. Define $g_k(z_k):\{0,1\}^{1}\rightarrow \{0,1\}^{2}$ according to
\begin{align}
    g_k(z_k) = (z_k,\neg z_k)
\end{align}
Then define $g$ by having $g_k$ act on each of the first $m$ bits of the input, producing a string of length $2m+1$. The function $f_I$ is now defined by modifying the span program $(M_{f'},\phi_{f'},\mathbf{t}_{f'})$ to take the output of $g$ as input.
First, the new span program has the same  matrix and target vector as before: $M_I = M_{f'}$ and $\mathbf{t}_I = \mathbf{t}_{f'}$.
Second, define $\phi_I$ by having it map $r_i'$ to the same input bit as $\phi_{f'}$ when $\epsilon_{f',i}=1$, and to the negated copy of that input bit when $\epsilon_{f',i}=0$.
Set $\epsilon_{I,i}=1$ always.
This ensures that $f_I$ and the span program computing it are monotone, but $f'=f_I\circ g$.
Additionally, every $(z,b)$ value which has $f_I(z,b)=1$ must have $b=1$, so $f_I$ is also no-cloning. 
Since secret sharing schemes can be built for any function that is no-cloning and monotone \cite{gottesman2000theory,smith2000quantum}, $f_I$ is a valid indicator function. 
Finally, notice that the monotone span program computing $f_I$ is the same size as the (non-monotone) span program computing $f'$, which in turn has one extra row as compared to the program for $f$. 
\end{proof}

We conclude with an example. Consider the function $f(x,y)=x \oplus y$. A (non-monotone) span program for this function has matrix
\begin{align}
    \begin{pmatrix}
1 & 0 \\
0 & 1 \\
1 & 0 \\
0 & 1 
\end{pmatrix}.
\end{align}
The map $\phi$ is defined by $\phi(r_1)=(1,1)$, $\phi(r_2)=(1,0)$, $\phi(r_3)=(2,1)$, $\phi(r_4)=(2,0)$, and the target vector is $(1,\ 1)$.
It is easy to check cases to confirm this computes $x\oplus y$.

We decompose this in the manner described in \cref{lemma:decomposition}. First, add one column and one row to the matrix according to
\begin{align}
    \begin{pmatrix}
1 & 0 & 0\\
0 & 1 & 0\\
1 & 0 & 0\\
0 & 1 & 0 \\
0 & 0 & 1
\end{pmatrix}.
\end{align}
We add one bit to the inputs, extend the map $\phi$ according to $\phi(r_5) = (3,1)$, and append a 1 to the target vector.
This span program defines the function $f'(x,y,b) = (x\oplus y)\wedge b$. Finally the map $g$ is defined according to
\begin{align}
    g(x,y,b) = (x,\neg x, y,\neg y, b)
\end{align}
and $f_I$ is defined by a span program with the above matrix and map $\phi_I$ defined by $\phi_I(r_1)=(1,1)$, $\phi_I(r_2)=(2,1)$, $\phi_I(r_3)=(3,1)$, $\phi_I(r_4)=(4,1)$, $\phi_I(r_5) = (5,1)$. 

\bibliographystyle{unsrtnat}
\bibliography{biblio}

\end{document}